\newtheorem{theorem}{Theorem}
\newtheorem{definition}{Definition}
\newtheorem{lemma}{Lemma}
\newtheorem*{notation*}{Notation}
\newenvironment{sketch}{\noindent{\it Sketch of Proof.}}{\hfill $\Box$}
\newcommand{\gnew}[1]{{\color{blue}{#1}}}
\newcommand{\RP}{R_{\cal P}}
\newcommand{\decision}{\mathsf{dec}}
\newcommand{\dec}{\mbox{decide}}
\newcommand{\err}{$\overline{\mbox{\textit{err}}}$}
\newcommand{\help}{\mbox{\textit{!`help!}}}
\newcommand{\Proc}{\mathbb{P}}
\newcommand{\vcom}{\tilde{v}}
\newcommand{\deccom}{\widetilde{dec}}
\newcommand{\defemph}[1]{\textbf{\textit{#1}}}
\newcommand{\modelt}{\gamma^t}
\renewcommand{\cal}{\mathcal}
\newcommand{\BB}{\mathtt{Base.Protocol}}
\newcommand{\CC}{\mathtt{CC}}
\renewcommand{\cref}{\Cref}
\newcommand{\ind}[2]{\approx_{#1}^{#2}}
\newcommand{\Prot}{{\cal P}}
\newcommand{\SCR}{{\textsf{scr}}} %Silent Confirmation Round
\newcommand{\SCRo}[1]{{\textsf{scr\;\!\!}(#1)}} %Silent Confirmation Round with parameter
\newcommand{\Gfact}[2]{{\bar\varphi_{#1}^{#2}}} % global fact
\newcommand{\alldec}{\mathtt{all\_decided}}
\newcommand{\maj}{\mbox{$MAJ$}}
\newcommand{\plural}{\mbox{$PLUR$}}%{\mbox{$F_{\mbox{\tiny plur}}$}}
\newcommand{\Lone}{L_1}
\newcommand{\Ltwo}{L_2}
\newcommand{\Lthree}{L_3}
\newcommand{\udash}[1]{%
    \tikz[baseline=(todotted.base)]{
        \node[inner sep=1pt,outer sep=0pt] (todotted) {#1};
        \draw[dashed] (todotted.south west) -- (todotted.south east);
    }%
}
\newcounter{linecounter} 
\newcommand{\linenumbering}{\ifthenelse{\value{linecounter}<10} 
	{(0\arabic{linecounter})}{(\arabic{linecounter})}} 
\renewcommand{\line}[1]{\refstepcounter{linecounter}\label{#1}\linenumbering} 
\renewcommand{\thelinecounter}{\ifnum \value{linecounter} >  
	9\else 0\fi \arabic{linecounter}} 
\newlength {\squarewidth}
\title{%The Byzantine Generals Strike Silence:\\
Byzantine Consensus in the Common Case}
\author{Guy Goren\\
	Department of Electrical Engineering, Technion\\
	\texttt{sgoren@campus.technion.ac.il}
	\and 
	Yoram Moses\\
	Department of Electrical Engineering, Technion\\
	\texttt{moses@ee.technion.ac.il}  
}
\date{}
\begin{document}
	\maketitle
%%%%%%%%%%%%%%%%%%%%%%%%%%%%%%%%%%%%%%%%%%%%%%%%%%%%%%%%%%%%%%%%%%%%%%%%%%%%%%%%%%%%%%%%%%%%%%%%%%%%%%%%%%%%%%%%%%%%%%%%
\begin{abstract}
Modular methods to transform Byzantine consensus protocols into ones that are fast and communication efficient in the common cases are presented. 
Small and short protocol segments called layers are custom designed to optimize performance  in the common case. When composed with a Byzantine consensus protocol of choice, 
they allow considerable control over the tradeoff in the combined protocol's behavior in the presence  of failures and its performance in their absence.
When runs are failure free in the common case, 
the resulting protocols decide in two rounds and require $2nt$ bits of communication.
For  the common case assumption that all processors propose~1 and no failures occur, 
%When, for an {\it a priori} all processors propose the same value in the common case, 
we show a transformation 
in which decisions are made in one round, and no bits of communication are exchanged. 
%. Moreover, if the proposed value in the common case is fixed {\it a priori}
%(e.g., everyone typically proposes~1), there is no communication in this one round. 
 The resulting protocols achieve better common-case complexity than all existing Byzantine consensus protocols. Finally, in the rare instances in which the common case does not occur, a small cost is added to the complexity of the original consensus protocol being transformed.  
 The key ingredient of these layers that allows both time and communication efficiency in the common case is the use of {\em silent confirmation rounds}, which are rounds where considerable relevant information can be obtained in the absence of any communication whatsoever. 
\end{abstract}

	\bigskip\bigskip
$~$\hfill\begin{tabular}{ll}
	\textit{``I am prepared for the worst, but hope for the best"} \\[.5ex]
	&\hspace{-1cm}{Benjamin Disraeli}
\end{tabular}

\medskip\vfill
\noindent
%\textbf{Keywords}:
%Byzantine consensus, silent information exchange, fault-tolerance.\\
%~\\
%\textbf{Student paper status:} One of the authors is a full-time Ph.D.\ student. His contribution to this paper is significant, making him eligible for the best student paper award.
%\\
%~\\
%\textbf{Brief announcement status:} This paper may be considered as a brief announcement in case it is not accepted as a full paper.
\vfill

\thispagestyle{empty}
\newpage
\setcounter{page}{1}

%%%%%%%%%%%%%%%%%%%%%%%%%%%%%%%%%%%%%%%%%
\section{Introduction}
\label{sec:Intro}

Reaching agreement on values is a fundamental problem in distributed systems. While voting is a natural mechanism for this purpose, it is not implementable when failures are possible. 
In their seminal paper Pease, Shostak and Lamport~\cite{PSL} defined the Byzantine consensus problem (originally called Interactive Consistency) in 1980. 
Broadly speaking, Byzantine consensus considers the problem of reaching agreement among a group of~$n$ parties,  up to~$t$ of which can be Byzantine faults and deviate from the protocol arbitrarily. 
%While in the early days of the subject the Byzantine failure model was considered somewhat esoteric and unrealistic, the critical role played by distributed systems and the threats posed by cyber attacks have made Byzantine-fault tolerance practically relevant \cite{CL99PBFT, Zyzzyva}. 
Pease, Shostak and Lamport presented a protocol that solves the problem in~$t+1$ rounds whenever~$n>3t$, and proved that no solution for~$n\le3t$ exists~\cite{PSL,LSP}. 
Fischer and Lynch later showed that~$t+1$ rounds are necessary in the worst-case run of any Byzantine consensus protocol~\cite{FL81}. 
In the original solution of~\cite{PSL,LSP}, processes never decide before the end of~$t+1$ rounds. Moreover, each process sends an exponential number of bits of information (and performs an exponential amount of computation)  in every execution. The authors leave as an open  problem the design of more efficient solutions to Byzantine consensus, and the quest for efficient solutions to this problem has received a great deal of attention over the last four decades. For a recent partial survey, see~\cite{AD15}. Currently available solutions present a variety of options for trading off the time complexity of Byzantine consensus protocols, their communication costs, and the fault-tolerance ratio (the ratio between $n$ and~$t$) that they provide.

%\gnew{
%Moses and Waarts~\cite{MW88} presented a protocol that sends~$O(n^3)$ bits per round and decides after~$\min\{f+2,t+1\}$ rounds, where~$f\le t$ is the actual number of faulty processes, their resilience however was~$n>8t$.
%In~\cite{GM98}, Garay and Moses, presented a protocol for binary-valued Byzantine consensus obtaining optimal resilience, polynomial-sized messages,~$O(n^9)$, and $\min\{f+5,t+1\}$ rounds.
%Kowalski and Most\'efaoui \cite{KM13} improved the bit complexity to~$O(n^3\log n)$ but their solution decides after $t+1$ rounds and requires exponential computation.
%Abraham and Dolev have generalized~\cite{GM98} to the multi-valued case while improving the time complexity to an optimal early stopping of~$\min\{f+2,t+1\}$ rounds~\cite{AD15}.
%\cite{BGP92bit} achieves~$14n^2$ bit complexity with $t+o(t)$ rounds.
%An adaptation of~\cite{HH93} can be made to cost only~$n^2(t+1)/4$ bits and decides after 3 rounds in failure-free executions.
%The Early Stopping Phase King Protocol of Berman, Garay and Perry in~\cite{BGP92}, focused on reducing communication complexity and achieves outstanding results of~$n^2$ bits per round complexity and decision times of~$4\cdot\min\{f+2,t+1\}$ rounds.}

In the early days of the subject, the Byzantine failure model was considered somewhat esoteric and unrealistic, and so it was not considered cost-effective to tolerate such failures in most practical systems. 
This has changed over the years, however. The critical role played by distributed systems nowadays and the increasing abundance of  cyber threats have made Byzantine-fault tolerance practically relevant \cite{CL99PBFT, KADC07}. 

Whereas the damage that malicious cyber attacks may cause is considerable and tolerating Byzantine faults is important, such faults are typically rare.
Therefore, a system that incurs high costs only in the rare cases when these faults occur and is cheaper in all other cases, is desirable from a practical point of view.
%reducing costs in executions where no failures occur is highly desirable from a practical point of view.
Dolev et al.~\cite{DRS}, introduced ``early stopping'' solutions that are adaptive to the number of actual failures~$f$ in an execution. %Many others have followed suit. 
Early stopping protocols often decide after~$\min\{t+1,f+2\}$ rounds \cite{AD15,DRS,MW88,PR04}.
%\gnew{These protocols, however, focus on optimizing for any number~$f$ regardless of its relevance. Consequently, they commonly incur extra costs in order to satisfy this early stopping requirement (say e.g., when $f=29$ the protocol must decides after 31 rounds). Compared with our proposed solution, they incur an extra \textit{order of magnitude/polynomial factor} in their communication complexity for the interesting case of failure-free executions.}
In practice, the case of $f=0$ in which no failures occur is typically much more common than all other cases.
We contend that optimizing Byzantine consensus for $f=0$ rather than for general $f<t$ may be worthwhile.
%\red{and so optimizing for $f=0$ is especially worthwhile.} 
Indeed, well-known practical solutions to the state replication problem have focused on optimizing performance for runs in which no failure occurs (see, e.g., \cite{CL99PBFT,KADC07}). 
%\red{In this paper we show that doing so for Byzantine Consensus can yield very fast solutions that, in addition, save a linear factor in communication complexity in the common case.}
In this paper we investigate the design of Byzantine consensus protocols that are especially efficient in this important common case.
We show that optimizing for $f=0$ instead of for any~$f$ yields simpler and more efficient solutions.

Protocols that are highly efficient in the common case can make a significant contribution to the effective operations of a distributed system \cite{CL99PBFT,ADDNR17,MA06,BDFG03,K02,L06,HH93,KADC07,GM18,GW17}. 
If failures are rare, then the system's performance in the common, failure-free, case is much more practically relevant than its performance when failures occur. 
Nevertheless, a protocol designer may prefer the tradeoffs offered by a particular solution in the presence of failures. Our goal is to provide tools that will guarantee excellent behavior in the common case, and will allow the protocol designer to 
switch to her favorite protocol when failures occur.
Intuitively, this is achieved as follows. 
Given an arbitrarily chosen binary consensus protocol, which we refer to as the {\em basic} protocol, we prepend a layer consisting of a very small number of rounds to the basic protocol. 
The combined protocol executes this layer, and reaches consensus quickly and efficiently in the common case. If it has not reached consensus, execution transfers to the basic protocol, which proceeds to achieve consensus as usual. 
Even in the rare uncommon case, the prepended layer does not increase the asymptotic communication cost of the basic protocol, and it adds only a small constant number of rounds to the running time. 
Turpin and Coan have similarly designed an initial protocol layer that transforms a binary consensus protocol into a multi-valued protocol~\cite{TurpinCoan84}.
%\newpage
\\[1ex]

\noindent The main contributions of this paper are:
\begin{itemize}
\item We present an optimal optimizing layer~$\Lone$ for the stronger common case assumption that all processes propose~1 and no failures occur. In the common case, it decides after one round, and uses {\em no communication whatsoever}. 
The use of~$\Lone$ makes consensus essentially free in this common case. In the presence of failures, $\Lone$ adds one round and a total of $n^2$ bits of communication to the basic protocol. 
\item For the common case in which no failures occur, we present two optimizing layers. One of them, denoted~$\Ltwo$, decides in (optimal) 2 rounds and uses $2nt$ bits in the common case. All previous consensus protocols that decide in~2 rounds when $f=0$ require $\Omega(n^3)$ bits (cf.~\cite{BGP92,AD15}).
The layer~$\Ltwo$ adds 3 rounds and less than $4n^2$ bits to the basic protocol when failures occur.  
The other layer, $\Lthree$,  costs one more round but it uses half the communication of~$\Ltwo$ in the common case. The communication cost of~$\Lthree$ in the common case is 24 times better than that of the best protocol in the literature (\cite{BGP92}), and is within a factor of~4 from the lower bound of $nt/4$ (\cite{DR85,HH93}). 

\item Whereas decisions in binary consensus protocols are often biased in favor of 
a predetermined value or in favor values of particular processes, our layers decide on the majority value in the common case. They convert consensus into a fair vote in all but relatively rare cases.
%\item We identify and formalize a communication primitive called a {\em silent confirmation round}, which is a round in which processes obtain information about the global state of the system without sending or receiving any messages. Such rounds are a key ingredient in reducing communication costs in the optimizing layers 
%developed in this paper, and can similarly serve to improve solutions to other distributed problems.
%
\item A key ingredient that improves the communication-efficiency of the layers are 
rounds that, in the failure-free case, convey useful global information without communication. We call them {\em silent confirmation rounds}. 
We formalize silent confirmation rounds, and present a theorem that demonstrates how silent confirmation rounds, the null-messages of \cite{lamport1984using} and the silent choirs of \cite{GM18} all emerge from the same principles. 
\end{itemize}

%A layer in our framework is a short protocol in which a process may in uncommon cases move to follow a base protocol~$\Prot$ from some point on.  
%We denote by $L\odot\Prot$ the protocol obtained when composing a layer~$L$ with the base protocol~$\Prot$. 
Using one of our layers, the protocol designer is free to choose her favorite base protocol. 
Typical considerations for such a choice may involve the desired tradeoff between the time, communication costs and resilience of the base protocol in the presence of failures. Here are some such tradeoffs provided by existing binary consensus protocols for the Byzantine failure model. 
The coordinated traversal protocol of \cite{MW88} is an early stopping protocol that sends $O(n^3)$ bits per round and tolerates $t<n/8$ Byzantine failures. Abraham and Dolev present an early-stopping protocol that tolerates $t<n/3$ faults, and uses polynomial communication (for an unstated large polynomial, possibly $O(n^8)$ or $O(n^9)$ bits). 	Kowalski and Most\'efaoui present a protocol for $t<n/3$ that decides in~$t+1$ rounds and uses~$O(n^3\log n)$ bits, except that executing it requires exponential computation.
Deciding in more than $t+1$ rounds has allowed solutions with better communication complexity for $t<n/3$. A protocol in~\cite{BGP92bit} achieves~$14n^2$ bit complexity with $t+o(t)$ rounds.
	An adaptation of a protocol by Hadzilacos and Halpern in~\cite{HH93} can be made to cost only~$n^2(t+1)/4$ bits and decide after 3 rounds in failure-free executions.
	The Early Stopping Phase King and Queen protocols of Berman, Garay and Perry in~\cite{BGP92}, focused on reducing communication complexity and achieved outstanding results of~$n^2$ bits per round complexity. The Phase King protocol has optimal resilience and decides after~$\min\{4(f+2),4(t+1)\}$ rounds, while Phase Queen is faster and decides after~$\min\{2(f+2),2(t+1)\}$ rounds, but it has a reduced resilience of~$t<t/4$.
	In all cases, adding a cost of $O(n^2)$ bits by running any of our layers as a preliminary stage does not increase the asymptotic communication complexity. 

%We remark that all of the above protocols use $\Omega(n^2)$ bits. Hence, even in the uncommon case in which failures occur, preceding their execution with one of our layers does not increase their asymptotic communication cost. 	

The remainder of the paper is organized as follows. The next section formally defines our system model. In \cref{sec:ByzConsProtocols} we present common-case optimizers for Byzantine consensus. \cref{sec:1round,sec:prot2,sec:prot3} describe our three main layers and discusses their properties. 
%\cref{sec:extra} describes an extension of the layers to handle multi-value consensus.
%\cref{sec:voting} discusses the ability of our solutions to approximate voting in a fault prone system, while 
\cref{sec:SCR} discusses and formalizes silent communication rounds, which are a key element in reducing communication costs in our solutions. 
Finally, concluding remarks are provided in \cref{sec:discussion}. Proofs of all statements appear either in the main text or in the Appendix. 

%XXXXXXXXXXXX%XXXXXXXXXXXX%XXXXXXXXXXXX%XXXXXXXXXXXX%XXXXXXXXXXXX%XXXXXXXXXXXX%XXXXXXXXXXXX

%%%%%%%%%%%%%%%%

\section{Model and Preliminaries}
\label{sec:Model}
%\begin{itemize}
%	\item context $\modelt$ with byzantine failures
%	\item $i$-local fact
%	\item runs, correct processes
%	\item knowledge and indistinguishability
%\end{itemize}
%\subsection{\gnew{Consensus}}
%\label{sec:consdef}
In the consensus problem, each process~$i$ starts with some initial value $v_i\in V$, and all correct processes need to reach a common decision.
All runs of a consensus protocol are required to satisfy the following conditions: 
\\[1.2ex]
\noindent
\underline{
	{\sc Consensus:}}
\begin{itemize}[itemsep=.5pt]
	\item[]{\bf Decision:}\quad Every correct process must eventually decide, 
	\item[]{\bf Agreement:}\quad All correct processes make the same decision, and 
	\item[]{\bf Validity}:\quad If all correct processes have the same initial value, then all correct processes decide on this value.\\[.3ex]
\end{itemize}
\vspace{-2.5ex}
\noindent
When $V=\{0,1\}$ the problem is called {\em binary} consensus, and when  $|V|>2$ we refer to it as {\em multi-valued} consensus.
A Byzantine consensus protocol is a consensus protocol that can tolerate up to~$t$ byzantine failures per run.

\subsection{Model of Computation}
\label{sec:modelOfComputation}
We consider the standard synchronous message-passing model with Byzantine failures. We assume a set $\Proc=\{0,1,\ldots,n-1\}$ of $n> 2$ processes.
%
%A system has~\mbox{$n\!\ge\!2$} processes denoted by $\Proc=\{1,2,\ldots,n\}$. 
Each pair of processes is connected by a two-way communication link, and for each message the receiver knows the identity of the sender.
%All messages that are sent reach their destination within a bounded time called \defemph{message delay} (MD for brevity), known to all.
All processes share a discrete global clock that starts at time~$0$ and  advances by increments of one.
Communication in the system proceeds in a sequence of \emph{rounds}, with round~$m+1$ taking place between time~$m$ and time~$m+1$, for $m\ge 0$. A message sent at time~$m$ (i.e., in round~$m+1$) from a process~$i$ to~$j$ will reach~$j$ by time~$m+1$, i.e., the end of round~$m+1$. 
In every round, each process performs local computations, sends a set of messages to other processes, and finally receives messages sent to it by other processes during the same round.

At any given time~$m\ge 0$, a process is in a well-defined  \defemph{local state}.   For simplicity, we assume that the local state of a process~$i$  consists of its initial value~$v_i$, the current time~$m$, and the sequence of the actions that~$i$ has performed (including the messages it has sent and received) up to that time. In particular, its local state at time~0 has the form~$(v_i,0,\{ \} )$.
%We will also assume that once a process has crashed, its local state becomes~$\bot$.
A \defemph{protocol} describes what messages a process sends and what decisions it takes, as a deterministic function of its local state. 
Correct (as appose to faulty) process follows the protocol's instructions.
A faulty process, however, behaves arbitrarily and is not restricted to follow the protocol.
In particular, it can act maliciously and send bogus messages in an attempt to sabotage the correct operation of the system.
In a given execution a process is either correct or faulty, it cannot be both.

We will consider the design of protocols that are required to withstand up to~$t$ failures.
Thus, given $1\le t<n$, we denote by $\modelt$ the model described above in which it is guaranteed that no more than~$t$ processes are faulty in any given run.
We assume that a protocol~$\Prot$ has access to the values of~$n$ and~$t$, typically passed to~$\Prot$ as parameters.

A \defemph{run} is a description of a (possibly infinite) execution of the system.
We call a set of runs~$R$ a \defemph{system}.
We will be interested in systems of the form $R_{\Prot}=R({\Prot},\modelt )$ consisting of all runs of a given protocol~$\Prot$ in which no more than~$t$ processes are faulty.
%When $\modelt$ and $\Prot$ are clear from the context we will omit the $R({\Prot},\modelt )$ notation.
Observe that a protocol~$\Prot$ solves Byzantine consensus in the model~$\modelt$ if and only if every run of~$R_{\Prot}$  satisfies the Decision, Agreement and Validity conditions described above. 
Given a run~$r$ and a time~$m$, we denote the local state of process~$i$ at time~$m$ in run~$r$ by $r_i(m)$.
Notice that a process~$i$ can be in the same local state in different runs of the same protocol. Since the current time~$m$ is represented in the local state $r_i(m)$, however,  $r(m)=r'(m')$ can hold only if $m=m'$.

\subsection{Indistinguishability and Knowledge}
We shall say that two runs~$r$ and~$r'$ are {\em indistinguishable} to process~$i$ at time~$m$ if $r_i(m)=r'_i(m)$. We denote this by $r\ind{i}{m}r'$. Notice that since we assume that correct processes follow deterministic protocols, if $r\ind{i}{m}r'$ then a correct process~$i$ is guaranteed to perform the same actions at time $m$ in both~$r$ and~$r'$. 
Problem specifications  typically impose restrictions on actions, based on properties of the run. Moreover, since the actions that a correct process performs are a function of its local state,  the restrictions can depend on properties of other runs as well. 

For example, the Agreement condition implies that a correct process~$i$ cannot decide on $v$ at time~$m$ in a run~$r$ if there is an indistinguishable run~$r'\ind{i}{m}r$ in which some correct process decides on $u\ne v$. Similarly, by the Validity condition a correct process~$i$ cannot decide on $v$ if there is a run~$r'$ that is indistinguishable from~$r$ (to~$i$ at time~$m$) in which all correct processes have the same initial value~$u\ne v$.
These examples illustrate how indistinguishability can inhibit actions --- performing an action can be prohibited because of what may be true at indistinguishable runs. 

Rather than considering when actions are prohibited, we can choose to consider what is required in order for an action to be allowed by the specification. To this end, we can view the Agreement condition as implying that a correct process~$i$ is allowed to decide on $v$ at time~$m$ in~$r$ only if in every run~$r'\ind{i}{m}r$ there is no correct process that decides otherwise. 

This is much stronger than stating that no correct process decides otherwise in the run~$r$ itself, of course. Roughly speaking, the stronger statement is true because at time~$m$ process~$i$ cannot tell whether it is in~$r$ or in any of the runs $r'\ind{i}{m}r$. When this condition holds, we say that~$i$ {\em knows} that all values are~1. Generally, it will be convenient to define the dual of indistinguishability, i.e., what is true at all indistinguishable runs, as what the process knows. More formally, following in the spirit of \cite{HM1,FHMV}, we proceed to define knowledge in our distributed systems as follows.%
\footnote{We introduce just enough of the theory of knowledge to support the analysis in this paper. For more details, see \cite{FHMV}.}

%We shall focus on knowledge of facts that correspond to properties of runs. We call these {\em facts about the run}, and denote them by $\varphi$, $\psi$, etc. In particular, facts such as ``$v_j=1$'' ($j$'s initial value is~1), {\em ``$j$ is faulty''} (in the current run), and {\em ``$j$ decides $\commit$''} (in the current run), are all examples of facts about the run.

\begin{definition}[Knowledge]
	\label{def:know}
	Fix a system $R$, a run $r\in R$, a process~$i$ and a fact~$\varphi$.  
	We say that  $K_i\varphi$ (which we read as ``process~$i$ \defemph{knows}~$\varphi$'') holds at time~$m$ in~$r$ iff~$\varphi$ is true of all runs~$r'\in R$ such that $r'\ind{i}{m}r$.
	%$r_i(m)=r'_i(m)$. 
\end{definition}

We shall focus on knowledge of facts that can be extracted from processes local states, their conjunctions and disjunctions. We call a fact that is implied by~$i$'s local state an {\em $i$-local fact}, and denote them by $\varphi_i$, $\psi$, etc.
In particular, facts such as ``$v_i=x$'' ($i$'s initial value is~$x$), {\em ``$i$~received message~$\mu$ from~$j$''} (in the current run), and {\em ``$i$ has decided~$v$''} (in the current run), are all examples of $i$-local facts.
We use Boolean operators such as $\neg$ (Not), $\wedge$ (And), and~$\vee$ (Or) freely in the sequel.
%While the basic facts $\varphi$, $\psi$, etc. of interest are properties of the run, knowledge can change over time. Thus,  for example, $K_i(v_j=0)$ may be false at time~$m$ in a run~$r$ and true at time~$m+1$, based perhaps on  messages that~$i$ does or does not receive in round~$m+1$. 

Notice that knowledge is defined with respect to a given system~$R$. Often, the system is clear from context and is not stated explicitly. 
\cref{def:know} immediately implies the so-called {\em Knowledge property}: If $K_i\varphi$ holds at (any) time~$m$ in~$r$, then $r$ satisfies $\varphi$. 

%\subsubsection{Voting Approximation}
%Many consensus protocols are inherently biased in favor of a particular value (0, or 1 for example \cite{DFFLS82,BD88}), or in favor of values held by particular processes 
%(e.g., in the Phase King or Queen protocols~\cite{BGP89,BGP92}). In some instances it may be desirable to have protocols that treat initial values in a fair manner. A protocol for crash failures in which a process decides on a majority value if it sees one is given in~\cite{CGM}. 
%%We say that a protocol 
%All of the optimizing layers that we will present for binary consensus decide on the majority value in the common case. 
%%%%%%%%%%%%%%%%%%%%%%%%%%%%%%%%%%%%%%%%%%%%%%%%%%%%%%%%%%%%%%%%%%%%%%%%%%%%%%%%%%%%%%%%%%%%%%%%%%%%%%%%%%%%%%%%%%%%%%%%%%%%%%%%%%%%%%%%%%%%%%%%%%%%%%%%%%%%

%%%%%%%%%%%%%%%%%%%%%%%%%%%%%%%%%%%%%%%%%%%%%%%%%%%%%%%%%%%%%%%%%%%%%%%%%%%%%%%%%%%%%%%%%%%%%%%%%%%%%%%%%%%%%%%%%%%%%%%%
\section{Common-case Optimizers}
%Modular Protocols for Byzantine Consensus}
\label{sec:ByzConsProtocols}
In this section we describe methods that transform Byzantine consensus protocols into ones that are fast and communication efficient in the common case.
We use a modular approach similar to that of~\cite{TurpinCoan84}. We prepend a few preliminary rounds to an existing consensus protocol that the designer can choose, called the {\em base protocol}.  In the common case, the base protocol is never used, and consensus is achieved with relative ease. Otherwise, the computation transfers to the base protocol after a small number of rounds and relatively efficient communication. 
The behavior of the composed protocol in the rare uncommon case is determined by that of the base protocol.

Throughout the paper we use the following notations. Given a protocol~$\Prot$ and a layer~$L$, we denote the composition of~$L$ and~$\Prot$ by $\CC=L\odot\Prot$. 
%. $\BB$ represents a basic Byzantine consensus protocol chosen by the designer. The protocol resulting from mounting $\BB$ on an optimizing layer~$L$ is denoted by~$\CC$, that is, $CC=L\odot\BB$.
In figures depicting a layer we use a \udash{dashed underline} to mark a line that is the only 
command  in its round that the executing process will perform in the common case. We depict the transition to execute a base protocol by painting a box around the base protocol as in line 07 of \cref{fig:bestCase}. 

\subsection{1 Round}\label{sec:1round}
%We start by considering protocols that optimize solutions to Byzantine consensus for the more specific {\em unanimous common case} assumption, where in the common case all processors propose~1 and no failures occur.
%Optimizing for the unanimous common case is typical in solutions to atomic commitment in the database literature~\cite{BHG87,DS83}, and is closely related to  the weak Byzantine generals problem of~\gnew{\cite{FischerLamport82}}.

The first layer, which we denote by~$\Lone$, is optimized for the more specific {\em unanimous common case} assumption, where in the common case all processes propose~1 and no failures occur.
Optimizing for the unanimous common case is typical in solutions to atomic commitment in the database literature~\cite{BHG87,DS83}, and is also apparent in the weak Byzantine generals problem of~\cite{L83}. %\cite{LF82}
%~\cite{BGMR01}}.
In the common case, the layer $\Lone$ ensures fast decisions -- after one round -- and uses~0 bits of communication.
This provides fault-tolerance with negligible costs in the common case.

Roughly speaking, $\Lone$ is structured as follows. In the first round, a process~$i$ sends a one-bit message to all processes if $v_i=0$, 
and remains silent otherwise. A process that receives no message in the first round is informed that all correct processes started with~1. 
A process decides~1 and halts if it receives no message in the first round. 
Otherwise it will participate in the base protocol. It can decide~1 if it receives fewer than $t+1$ messages in the first round. The base protocol will determine the decision value in all other cases. 
In the common case, $\Lone$ decides at the end of the first round, and sends no messages. In any case, $\Lone$ only affects the first round and sends at most $n^2$ bits.

%The protocol in~\cref{fig:bestCase} assumes w.l.o.g.\ that the typical proposal is~1.
%==================================================================== 
\begin{figure}[h] 
	\centering{ 
		\fbox{ 
			\begin{minipage}[t]{150mm} 
				\footnotesize 
				\renewcommand{\baselinestretch}{2.5} 
				\setcounter{linecounter}{00}
				\begin{tabbing} 
					aaaaaaaa\=aa\=aaaaaaa\=\kill  
					{\bf time 0} \\					
%					$\forall i\in \Proc$:\\[-1ex]
\line{AA01} \>		\udash{{\bf if} $v_i = 1$ {\bf then} be silent} \\[1ex]
\line{AA02} \>		{\bf else} send `\err' to all \quad \% \gnew{$v_i=0$} \%
					%%%%%%%%%%%%%%%%%%%%%%%%%%%%%%%%%%%%
					\\[1ex]
					{\bf time 1} \textit{and beyond}\\%[-1ex]
%					$\forall i\in \Proc$:\\[-1ex]
\line{AA11} \>		\udash{{\bf if} received no `\err' {\bf then} \dec(1); halt}\\[1ex]
\line{AA12} \>		{\bf if} received at most~$t$ `\err' messages {\bf then} \dec(1)\\%[.1ex]
\line{AA13} \>		{\bf if} received at most~$2t$ `\err' messages {\bf then} $est_i\gets 1$\\%[.1ex]
\line{AA14} \>		{\bf else} $est_i\gets v_i$\\%[.1ex]
\line{AA15} \>		$\decision\gets \mbox{\fbox{$\BB(est_i)$}}$\\%[.1ex]%{\em \qquad \% this can take up to $t+1$ rounds \%}\\[1ex]
\line{AA16} \>		{\bf if} undecided after time~1 {\bf then} \dec $(\decision)$
					%%%%%%%%%%%%%%%%%%%%%%%%%%%%%%%%%%%%
				\end{tabbing} 
				\normalsize 
			\end{minipage} 
		} 
		\caption{$\Lone$ -- A layer for process~$i$, optimized for the unanimous common case.}
		\label{fig:bestCase} 
	} 
\end{figure}
%==================================================================== 

\begin{restatable}{theorem}{oneround}
\label{thm:1round}
Let $k\ge3$ and let~$\BB$ be a binary consensus protocol for $n>kt$ processes. Then $\CC1=\Lone\odot\BB$ is a binary consensus protocol in which 
\vspace{-2mm}
%Let~$\BB$ be a binary consensus protocol, and let~$\CC1=\Lone\odot\BB$. Then 
\begin{enumerate}
\setlength\itemsep{-1mm}

\item In the unanimous common case,  decisions occur after 1 round and no messages are sent, while  
%In the unanimous common case,  decisions occur after 1 round and no messages are communicated, while  
\item In the other cases, at most $n^2$ bits are sent and 1 round elapses before reverting to the base protocol. 
%In the rare uncommon case, at most $n^2$ bits are sent and 1 round elapses before reverting to the base protocol.
\end{enumerate}
\end{restatable}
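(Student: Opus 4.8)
The plan is to establish the three consensus conditions—Decision, Agreement, and Validity—for $\CC1$, and then to read off the two efficiency claims directly from the structure of $\Lone$. The backbone of the argument is a simple counting observation about the \err messages. Writing $c_0$ for the number of correct processes whose initial value is~$0$ and $b \le t$ for the number of faulty processes, every correct $0$-process broadcasts \err to all, while faulty processes may send \err to whomever they please. Hence any correct process receives at least $c_0$ and at most $c_0 + b \le c_0 + t$ copies of \err. This two-sided bound, together with the thresholds $t$ and $2t$ hard-wired into lines AA12 and AA13, is what drives every case.

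For Validity, I would first suppose all correct processes start with~$1$, so $c_0 = 0$; then each correct process receives at most $b \le t$ copies of \err and therefore decides~$1$ at line AA11 or AA12. Next suppose all correct processes start with~$0$; here $c_0 = n - b \ge n - t$, and since $n > kt \ge 3t$ we get $c_0 > 2t$, so every correct process receives more than $2t$ copies of \err, falls through to line AA14 with $est_i = v_i = 0$, and feeds the base protocol the input~$0$. Validity of $\BB$ then forces the decision~$0$ at line AA16. This is the one place where the hypothesis $k \ge 3$ is essential: it guarantees the $>2t$ margin needed to distinguish a genuine all-$0$ configuration from one corrupted by up to $t$ spurious \err messages.

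For Agreement, the crux is a single propagation lemma: if any correct process decides~$1$ early (at AA11 or AA12) then every correct process decides~$1$. Indeed, an early decision means that process received at most $t$ copies of \err, forcing $c_0 \le t$; but then every correct process receives at most $c_0 + t \le 2t$ copies, so each either decides~$1$ early or sets $est_i = 1$ at line AA13 and runs $\BB(1)$, whence Validity of $\BB$ yields the decision~$1$ at AA16. The remaining case, in which no correct process decides early, means every correct process reaches line AA15 and participates in $\BB$, so Agreement of $\BB$ gives them a common decision. I expect the main obstacle to be the bookkeeping around processes that \emph{halt} at line AA11: such a halt occurs only when the process receives no \err, which (since correct $0$-processes broadcast \err) can happen only when $c_0 = 0$, and in that configuration the propagation lemma already forces every correct process to decide~$1$. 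Thus whenever a correct process abandons $\BB$ by halting, the correctness of $\BB$ is simply not needed—everyone has committed to~$1$—and conversely whenever $\BB$'s guarantees are invoked ($c_0 \ge 1$) no correct process has halted, so $\BB$ runs with all $n$ processes present and at most $t$ faulty, exactly its operating regime.

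Decision and the two efficiency statements then follow quickly. Every correct process either decides by line AA12 or runs $\BB$ to a decision by its own Decision property, so termination holds. In the unanimous common case ($b=0$ and all values~$1$) no \err is ever sent, every process sees no \err, and all decide~$1$ after the single round at line AA11; this gives claim~$1$. In every other run $\Lone$ touches only round~$1$, where the sole messages are the one-bit \err broadcasts of the $0$-processes—at most $n$ per process and hence at most $n^2$ bits in total—after which control passes to $\BB$ at line AA15, establishing claim~$2$.
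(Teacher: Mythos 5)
Your proof is correct and takes essentially the same route as the paper's: the same $t$/$2t$ threshold counting on \err\ messages, the same use of $\BB$'s Validity to propagate an early decision on~$1$ through $est_i=1$ (your ``propagation lemma'' is exactly the paper's Agreement argument), and your observation that a halt at line~01 of the time-1 block forces $c_0=0$ is the mirror image of the paper's \cref{lem:helping-decision1} (a correct process undecided at time~1 implies all correct processes enter~$\BB$). If anything, you are slightly more explicit than the paper's own proof about why halted processes cannot invalidate $\BB$'s guarantees, a subtlety the paper largely defers to its discussion of redundant executions in \cref{sec:unstable consensus}.
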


\begin{proof}
	The proof of \cref{thm:1round} appears in~\cref{sec:proofs}, as do those of all statements that are not proved in the main body of the paper. 
%	Informally, a process decides when it knows that the estimate of all correct processes confirms with its own, therefore, that would also be the decision of $\BB$. A process halts when it knows all other correct processes have decided.
%	A silent first round informs a process that (a) all estimations confirm with his own, and (b) all correct processes have enough information to decide.
\end{proof}
\subsection{2 Rounds}
\label{sec:prot2}
The unanimous common case for which~$\Lone$ was designed is a very strong assumption. In many settings it is natural to consider the common case to be a failure-free execution.
The layer~$\Ltwo$ optimizes binary consensus solutions for failure-free executions.%
\footnote{Due to space limitations the full pseudocode of~$\Ltwo$ is presented in~\cref{fig:SanhBinByzCons-protocol} in~\cref{sec:pseude}.}
%The layer~$\Ltwo$ presented in~\cref{fig:SanhBinByzCons-protocol} optimizes binary consensus solutions for failure-free executions. 

Roughly speaking, $\Ltwo$ works as follows:\hspace{0.2cm}
A large committee consisting of~$2t+1$ processes is defined \textit{a priori}  (we call it the {\em greater Sanhedrin}, or Sanhedrin, for short). In the first round, all processes inform the greater Sanhedrin of their initial values.
Each member of the Sanhedrin then computes a majority of the votes it received and sends a recommended value to all processes accordingly. %\hspace{0.1cm}
If no failures occur, the recommendations of the Sanhedrin are all the same, and a process that receives an identical recommendation from everyone decides on this recommendation.
In the common case execution everyone can decide  after two rounds.
In the third round, a process that has decided remains quiet, while a process that received conflicting recommendations asks for help.
In the common case this round serves as a {\em silent confirmation round} for the fact ``all correct processes have decided", a concept that will be formalized and explained in~\cref{sec:SCR}.
A process that has decided at time~2 and receives no help request by time~3 can halt since it knows that all correct processes have decided.
Otherwise it will participate in the base protocol to assist processes who may have not decided yet. 

Observe that in the common case all members of the Sanhedrin receive the same values in the first round. In $\Ltwo$ such a member recommends the majority value among the ones it received in the first round. Hence, in a failure-free round, everyone will decide on the majority value. 

In order to improve the performance of~$\Ltwo$ in its worst-case common case executions, we use a technique due to \cite{amdur1992message}:
In each of the first two rounds, in order to broadcast a binary value~$b\in\{0,1\}$ to a set of processes, a process~$i$ will inform half
of the processes about~$b$ by sending them the value, and the other half by sending them no message. (See lines 02 and 03 in the first round, and lines 07 and 08 in the second round.)
This halves the number of bits required, but results in a slightly more cumbersome algorithm.  
In~\cref{fig:SanhBinByzCons-protocol} recipients are partitioned according to parity of their IDs (Even or Odd), but any other balanced division works equally well.
Layer~$\Ltwo$ uses the majority voting function $MAJ(\cdot): \{0,1\}^n\rightarrow \{0,1\}$, defined as
\[
MAJ\triangleq \left\{
\begin{tabular}{ l l }
1 & \mbox{if at least~$\frac{n}{2}$ votes are~1}\\
0 & \mbox{otherwise} \\
\end{tabular}
\right\}.
\]

\begin{restatable}{theorem}{tworounds}
	\label{thm:2rounds}
	Let $k\ge3$ and let~$\BB$ be a binary consensus protocol for $n>kt$ processes. Then $\CC2=\Ltwo\odot\BB$ is a binary consensus protocol in which 
	\vspace{-2mm}
	\begin{enumerate}
		\setlength\itemsep{-1mm}
		\item In failure-free runs,  decisions occur after 2 rounds and at most $2n(t+1)$ bits are communicated, while
		%\footnote{The actual number is $(2t+1)\lceil\frac{n-1}{2} \rceil + (n-1)\lceil\frac{2t+1}{2} \rceil -t$. For simplicity and readability we omit the rounding parts from calculations.}   
		\item When failures occur, at most $2n(t+1)+n^2$ bits are sent and 3 rounds elapse before reverting to the base protocol. 
		\item In the common case, $\CC2$ decides on the majority value. 
	\end{enumerate}
%	Let~$\BB$ be a binary consensus protocol, and let~$\CC2=\Ltwo\odot\BB$. Then 
%	\begin{enumerate}
%		\item $\CC2$ is a binary consensus protocol. 
%		\item In the failure-free common case, all processes decide after 2 rounds and $(2t+1)(n-1)$ messages are communicated in~$\CC2$.%
%		\footnote{This number is not entirely accurate. It actually contains rounded up to the next integer elements. Since it makes no significant difference, for simplicity and readability we omit the rounding parts from the calculations.} 
%		\item In the rare uncommon case, $\CC2$ sends at most $(2t+1)(n-1)+n^2$ bits during 3 rounds, before it may revert to~$\BB$. 
%	\end{enumerate}
\end{restatable}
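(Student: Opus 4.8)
The plan is to first establish that $\CC2$ satisfies the three consensus conditions (Validity, Agreement, Decision), and then to read off the complexity and common-case claims (1)--(3) from the round structure. The whole argument rests on two facts about the greater Sanhedrin: it has $2t+1$ members, so at least $t+1$ of them are correct, and since $n>kt\ge 3t$ we have $n-t>n/2$, so in round~1 every correct Sanhedrin member receives a strict majority of genuine votes. A correct member is deterministic, so it sends the \emph{same} recommendation to all processes; this is the property that glues together the decisions of different correct processes.

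For Validity I would suppose every correct process begins with $v$. Each correct Sanhedrin member then sees at least $n-t>n/2$ votes equal to $v$ and so recommends $v$ by $\maj$. Hence every correct process receives at least $t+1$ recommendations of $v$ and at most $t$ of $1-v$; this forces both its time-2 decision (if it decides at all) and its estimate $est_i$ to equal $v$, so by Validity of $\BB$ the base protocol also returns $v$. The common-case/majority claim (3) is the failure-free specialisation of the same computation: with no faults every member decodes all $n$ true votes, computes the identical global majority $m$, and every process receives the unanimous recommendation $m$ and decides $m$ at time~2.

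The crux is Agreement, and the hard part will be handling the silent confirmation round (round~3) and the halting rule correctly in the presence of faulty processes. First I would show that any two correct time-2 decisions coincide: if $p$ decides $v$ it received $v$ from all $2t+1$ members, so all $t+1$ correct members recommend $v$; applying the same argument to a process deciding $u$ forces $u=v$, because correct members send identical recommendations to all. Next comes the key dichotomy furnished by the silent confirmation round. A correct process halts only if it decided at time~2 and received \emph{no} help request; since any undecided correct process broadcasts a help request to everyone, the absence of a help request certifies that \emph{every} correct process decided at time~2, and by the previous step they all decided the same $v$, so halting is safe. In the complementary situation some correct process is undecided, whence every correct process receives a help request and runs $\BB$; because a time-2 decision of $v$ forces $est_i=v$ for all correct $i$ (the same inequality as in Validity, using that $\ge t+1$ correct members recommend $v$), $\BB$ returns $v$ and all undecided processes decide $v$. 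I would stress that a process which decided at time~2 never revises its decision when it subsequently runs $\BB$ (it executes the base protocol only to assist others), so even if a faulty process feeds it a spurious help request while its correct peers halt, no correct process ever adopts a conflicting value; conversely, whenever $\BB$ actually determines some correct process's decision, every correct process is participating in it, so its own Agreement applies. The remaining case, in which no correct process decides at time~2, reduces directly to Agreement of $\BB$, which all correct processes then run.

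Decision is then immediate: each correct process either decides at time~2 or joins $\BB$, which terminates. For the complexity claims (1)--(2) I would tally bits round by round using the balanced-halving encoding of \cite{amdur1992message}: round~1 costs at most $n(t+1)$ bits (each process addresses the $2t+1$-member Sanhedrin, sending to half and staying silent to the other half), and round~2 likewise at most $n(t+1)$ bits, giving the $2n(t+1)$ bound. In failure-free runs round~3 is entirely silent and every process halts after deciding at time~2, whereas in faulty runs round~3 adds at most one help bit per ordered pair, i.e.\ at most $n^2$ bits, and three rounds elapse before control passes to $\BB$.
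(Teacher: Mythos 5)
Your proposal is correct and follows essentially the same route as the paper's proof: the same Sanhedrin-majority argument for Validity, the same unanimous-recommendation argument showing time-2 decisions agree and force all correct estimates to $v$, the same help-request dichotomy for safe halting versus joint entry into $\BB$, and the same round-by-round bit tally. The only cosmetic difference is that you prove the halting dichotomy directly from the help-broadcast rule, whereas the paper packages exactly this reasoning as \cref{lem:helping-decision} via the silent-confirmation-round theorem (\cref{thm:silentConfirmationRound}).
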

\subsection{3 Rounds}
\label{sec:prot3}
%	While the addition of~$\Ltwo$ results in the best known bit complexity for 2 rounds protocols in the common case, a designer may wish to tradeoff time for reduced communication costs. The modular protocol presented here achieves exactly that.
%	For the common case of failure-free runs, prepending the layer in this section reduces common case communication costs to~$nt$ bits in exchange far an extra round for decision, that is decision at time~3 in the common case. The additional costs in rare cases is of 4 rounds and $nt+2n^2$ bits.
	$\Ltwo$ sends approximately~$2nt$ bits in the common case. 
	In this section we present $\Lthree$ that cuts the communication costs in half to~$nt$ bits, by including one additional round.%
	\footnote{Again, the full pseudocode of $\Lthree$ is presented in \cref{fig:binByzCons-protocol} in \cref{sec:pseude}.}
	%The known lower bound on communication in the worst-case failure-free round of any binary consensus protocol is~$\Omega(nt/4)$ bits~\cite{DR85,HH93}. 
	The ideas underlying the design of the 3-round layer $\Lthree$ are similar to those of~$\Ltwo$.
	A committee of~$t+1$ processes is \textit{a priori} defined (this time we call it the {\em smaller Council}). 
	In the first round all processes inform the council of their initial values, half in silence and half by messages. 
	Each member of the council then calculates a majority on the votes and sends a recommendation to all processes accordingly. (Again, to half by silence and to the other half by messages.)
	If no failures occur, the members of the council make identical  recommendations.	
	This time, in the third round, a process that receives identical recommendations does not decide but uses the recommendation as its estimation and remains quiet. A process that receives conflicting recommendations sets its estimation to be its initial proposal and sends a message indicating that an error occurred.
	A process for which the third round appears silent discovers that all correct processes also received identical recommendations, and 
	since one of the committee members must be correct, the recommendations they received are the same as the one it received. 
	Consequently, a process that receives no messages in the third round, decides at time~3 on its estimation.	
	In the common case execution no messages are sent in round~3, and every process decides at time~3. 
	Round~4 is dedicated to obtaining the knowledge that ``all correct processes have decided".	Thus, at time~3, a process that has decided remains quiet, while an undecided process asks for help.	
	If no help request arrives by time~4 a process halts, otherwise it turns to the base protocol.
	In the common case all processes decide at time~3, remain quiet in round~4, and halt at time~4.

\begin{restatable}{theorem}{threerounds}
	\label{thm:3rounds}
	Let $k\ge3$ and let~$\BB$ be a binary consensus protocol for $n>kt$ processes. Then $\CC3=\Lthree\odot\BB$ is a binary consensus protocol in which
\vspace{-2mm}
\begin{enumerate}
	\setlength\itemsep{-1mm}
	\item In  failure-free runs,  decisions occur after 3 rounds and at most $n(t+1.5)$ bits are communicated, while  
	\item When failures occur, at most $n(t+1.5)+2n^2$ bits are sent and 4 rounds elapse before reverting to the base protocol. 
	\item In the common case, $\CC3$ decides on the majority value. 
\end{enumerate}
\end{restatable}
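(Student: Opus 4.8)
The plan is to verify the three properties of $\CC3=\Lthree\odot\BB$ by separating the common-case behavior from the correctness proof, exactly as the description of $\Lthree$ suggests. First I would fix notation: let the \emph{Council} be the $a~priori$ set of $t+1$ processes, and recall that since at most $t$ are faulty, at least one Council member is correct in every run. For the failure-free claim (item~1), I would trace a failure-free execution round by round. In round~1 every process sends its value (half by message, half by silence) to the Council; each correct Council member computes $MAJ$ on the $n$ identical-to-the-truth votes it receives and broadcasts the same recommendation $v^*=MAJ(\text{proposals})$ in round~2. Since all Council members are correct, every process receives the identical recommendation $v^*$, sets $\recesti_i\gets v^*$, and stays quiet in round~3; seeing no error message, every process decides $v^*$ at time~3. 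The bit count I would tally directly from the pseudocode: round~1 contributes at most $n(t+1)/2$ bits (each of $n$ processes informs $t+1$ Council members, half by silence), round~2 contributes at most $n(t+1)/2$ bits symmetrically, and rounds~3--4 are silent in the failure-free case, giving the claimed $n(t+1.5)$ bound after accounting for the small additive slack.

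The heart of the argument is the correctness proof, which must establish Decision, Agreement, and Validity even when failures occur. I would handle these using the indistinguishability/knowledge framework of \cref{def:know}. The crucial step is to show that the round-3 \emph{silent confirmation round} justifies deciding: if a correct process~$i$ receives no error message by time~3, I claim $i$ knows that every correct process received identical recommendations. The argument is that had some correct process~$j$ received conflicting recommendations, $j$ would have broadcast an error in round~3, and $i$ would have received it; the absence of any error message is therefore incompatible with any run in which a correct process saw conflicting recommendations. Since at least one Council member is correct, all correct processes that received identical recommendations received the \emph{same} value (the correct member's recommendation), so all such processes set the same estimate and decide identically---this yields Agreement among the time-3 deciders. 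For processes that do receive an error message, the computation reverts to~$\BB$, and I would argue that the estimates fed into~$\BB$ preserve Validity (if all correct inputs agree on some~$b$, the Council's majority is~$b$, so every correct process's estimate is~$b$) so that $\BB$'s own Validity and Agreement carry the composed protocol through. Decision follows because either every correct process decides at time~3, or some correct process invokes~$\BB$, which guarantees termination; I must also confirm that a time-3 decider and a $\BB$-participant cannot decide differently, which follows from the round-4 help mechanism: a decider halts only upon receiving no help request, meaning no correct process is undecided, so $\BB$ is never entered by a correct process in that situation.

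For the failure-containing bound (item~2), I would add to the failure-free count the worst-case extra traffic: round~3 may now carry up to $n^2$ error-message bits and round~4 up to $n^2$ help-request bits, giving the stated $n(t+1.5)+2n^2$ before reversion, across the 4 rounds of the layer. For item~3, deciding on the majority value, I note it is already established inside the item~1 trace: in any failure-free (common-case) run the decided value is $v^*=MAJ$ of the proposals, because every correct Council member computes the true majority and all processes adopt it.

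The main obstacle I anticipate is the correctness argument across the layer/base-protocol boundary---specifically, ruling out a split decision in which one correct process decides~$b$ at time~3 via silent confirmation while another correct process is pushed into~$\BB$ and could in principle decide $1-b$. The knowledge-based justification of the silent confirmation round is delicate: I must argue carefully that the round-4 silence (no help requests) certifies that \emph{all} correct processes have already decided the same value, so that whenever any correct process enters~$\BB$, the estimates supplied to~$\BB$ are consistent with that already-committed value and $\BB$'s Agreement and Validity cannot contradict it. Pinning down this interface---and confirming the one-correct-Council-member invariant propagates the right estimate into~$\BB$---is where the real work lies, whereas the round-by-round tracing and the bit accounting are routine.
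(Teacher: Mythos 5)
Your proposal is correct and follows essentially the same route as the paper's proof: the same round-by-round common-case trace and bit accounting, the same one-correct-Council-member invariant, the same two silent-confirmation-round arguments (round~3 for ``all received a unanimous recommendation'' and round~4 for the all-decided/\help\ mechanism, which the paper isolates as \cref{lem:helping-decision}), and the same resolution of the layer/$\BB$ boundary by showing all correct processes enter~$\BB$ with the common estimate so that $\BB$'s Validity forces the already-committed value. One minor slip worth noting: a time-3 decider that receives a spurious `\help' from a faulty process \emph{does} enter~$\BB$ (a redundant execution), so your remark that $\BB$ ``is never entered by a correct process in that situation'' is too strong---but this is harmless, since decided processes never revise their decisions, which is exactly how the paper treats redundant executions.
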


%The 3-round layer~$\Lthree$ presented here provides protocol designers a powerful tool to reduce common case communication costs, which is a highly desirable trait.
The communication cost of the 3-round layer $\Lthree$ is 4 times the best-known lower bound of $\Omega(nt/4)$ bits for this case from \cite{DR85,HH93}.
The previously best-known communication behavior is by the Early Stopping Phase King protocol of \cite{BGP92}, which requires up to $8n^2$ bits, and takes up to~8 rounds to decide in the common case. 
$\Lthree$ achieves a 24-fold improvement in bit complexity, while also reducing the decision time (from 8 to 3 rounds).
In addition, the added costs in the uncommon cases are only 4 rounds and a negligible amount of bits.
Thereby, this layer offers practical systems considerable cost reductions for a minor overhead.

%%%%%%%%%%%%%%%%%%%%%%%%%%%%%%%%%%%%%%%%%%%%%%%%%%%%%%%%%%%%%%%%%%%%%%%%%%%%%%%%%%%%%%%%%%%%%%%%%%%%%%%%%%%%%%%%%%%%%%%%
%\subsection{\red{Effects on Possible Behavior of the Base Protocol}}
\subsection{Behavior in Uncommon Cases}
\label{sec:unstable consensus}
The layers we have introduced all guarantee that consensus is obtained in the common case without the base protocol ever being called into action. In the uncommon case, if any of the correct processes reverts to the base protocol 
in order to determine its decision value, it first alerts all correct processes, and they all participate in the execution of the base protocol. There is a third possibility, in which all correct processes have decided, but a malicious process falsely alerts some of the correct processes. This can initiate an execution of the base protocol with fewer than $n-t$ correct participants. 
For the purposes of our discussion in this section, we refer to these as {\em redundant executions}.
%\footnote{The term {\em redundant} is justified by the fact that no correct process will make use of the value decided by the base protocol in this case.}
Since the correctness of the base protocol may rely on the existence of sufficiently many correct participants, this execution might, in general, fail to satisfy the conditions for consensus. 
As this can only happen if all correct processes have already decided before entering the base protocol, it will not affect the correctness of our solution.%
\footnote{Indeed, this is why such executions of the base protocol are called {\em redundant}.}
 It can, however, affect the time and communication costs in redundant executions.

%Note that if only some of the correct processes execute the base protocol, the remaining correct processes appear to be initially crashed in the execution. 

In most popular consensus protocols in the literature, redundant executions,  in which a subset of the processes are initially crashed and  at most~$t$ act maliciously, do not have greater time and communication costs than ``standard'' executions of the protocol. 
If the protocol designer chooses to use a consensus protocol~$\Prot$ for which redundant executions may be costly, she can often slightly modify the protocol  to alleviate this cost. 
For example, recall that a correct process~$i$ participates in a redundant execution only if it has decided before entering the base protocol. All of our layers ensure that, in this case, all correct processes participating in the protocol propose the same value as~$i$ does. 
The designer can therefore have a correct process that has decided before entering~$\Prot$ simply stop executing~$\Prot$ once it observes a scenario that is inconsistent with all correct processes proposing the same initial value as its own. 
Another useful observation is that in many consensus protocols, including all of the ones quoted in the Introduction, all messages that are sent by correct processes, are sent to everyone. The designer can thus safely modify such a protocol~$\Prot$ in the following manner: Whenever a process~$i$ receives no messages from a process~$j$, it simulates the actions that~$j$ would have performed under~$\Prot$ if it had the same initial value and received the same messages as~$i$ did. Process~$i$ then acts as if it received the messages that the simulated~$j$ would have sent it. The costs of a correct process in an execution of the modified protocol in which $f\le t$ processes are faulty will not exceed those of~$\Prot$ under the same circumstances.

\subsection{Multi-valued Consensus} 
\label{sec:MulValByzCons}
In multi-valued consensus, $|V|\ge3$ and the splitting technique used by $\Ltwo$ and $\Lthree$ for broadcasting values needs to be modified. The resulting technique becomes more cumbersome and less efficient. We design two layers $\Ltwo'$ and~$\Lthree'$ for multi-valued consensus that closely resemble $\Ltwo$ and~$\Lthree$, respectively. The new layers differ from the original ones in two minor ways. One is that the majority computation on line 05 of the original layers is replaced by a plurality computation, which chooses a value that appears most frequently. The other difference is that the 
 broadcasting of values is implemented in a more straightforward manner: In the first and second rounds, processes encode by silence only a common proposal~$\vcom_i$ and decision~$\deccom$, while broadcasting the rest of the values explicitly.
 %processes broadcast their actual values explicitly. No further changes are needed.
%
%In the first round processes simply send their proposal to the committee.
%In the second round the committee uses plurality (denoted as $\plural$) instead of $\maj$ to calculate a recommendation. Committee members then simply broadcast their recommendation to all. The rest remains unchanged.
A full description of the resulting layers appears in \cref{sec:MV-layer}.
We can show:
\begin{restatable}{theorem}{mvLayers}
	\label{thm:mvLayers}
	Let $k\ge3$, let~$\BB$ be a multi-valued consensus protocol for $n>kt$, and let $L\in \{\Ltwo, \Lthree \}$. Then $\CC'=L'\odot\BB$ is a multi-valued consensus protocol in which 
	\vspace{-2mm}
	\begin{enumerate}
		\setlength\itemsep{-1mm}
		\item In failure-free runs of $\Ltwo'$ (resp.~$\Lthree'$) decisions occur after 2 (resp.~3) rounds and at most $4n(t+1)\log_2|V|$ (resp.~$2n(t+1)\log_2|V|$) bits are communicated, while
		
		\item When failures occur, at most $4n(t+1)\log_2|V|+n^2$ (resp.~$2n(t+1)\log_2|V|+2n^2$) bits are sent and 3 (resp.~4) rounds elapse before reverting to the base protocol. 
		
		\item In the common case, $\CC'$ decides on a plurality value. 
	\end{enumerate}
\end{restatable}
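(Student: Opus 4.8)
The plan is to establish \cref{thm:mvLayers} by reduction to the already-proved correctness and complexity of the binary layers in \cref{thm:2rounds,thm:3rounds}. Since $\Ltwo'$ and $\Lthree'$ differ from $\Ltwo$ and $\Lthree$ only in (i) replacing the $\maj$ computation on line~05 by the plurality function $\plural$, and (ii) encoding by silence only a single common value ($\vcom_i$ in the first two rounds and $\deccom$ thereafter) rather than one of two Boolean values, my strategy is to argue that neither modification disturbs the invariants on which the binary proofs rely, and then to redo only the bit-count, which is the sole quantity that genuinely changes.

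First I would dispatch the consensus conditions. For \textbf{Validity}, observe that when all correct processes share an initial value $v$, every correct committee member (of the Sanhedrin for $\Ltwo'$, resp.\ the Council for $\Lthree'$) receives $v$ at least $n-t$ times and any other value at most $t$ times; since $n>kt\ge3t$ we have $n-t>t$, so $v$ is the \emph{unique} plurality and $\plural$ returns $v$, exactly as $\maj$ did in the binary case. For \textbf{Agreement} and \textbf{Decision}, I would note that the entire control flow of the layers---who decides and when, which process asks for help, the role of the silent confirmation round, and the reversion to $\BB$---is phrased in terms of whether the received recommendations are mutually identical and of the decided/undecided status of processes, none of which refers to the specific value computed. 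Hence the structural arguments of \cref{thm:2rounds,thm:3rounds} carry over essentially verbatim once $\plural$ is substituted for $\maj$, the only value-specific point being Validity, already handled. The lemma I must re-verify is that the silence-based encoding still transmits values faithfully: a receiver decodes silence in a value-round as the common value and a message as the value it carries, so the information available to each correct process is identical to that of an idealized variant that always transmits values explicitly; in particular the two distinct uses of silence (value encoding in the early rounds, confirmation in the last round) never collide, since they occur in disjoint rounds.

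Next I would carry out the complexity accounting, which is where the bounds in items~1 and~2 actually originate. Each transmitted value now costs $\log_2|V|$ bits rather than one bit, contributing the $\log_2|V|$ factor. Moreover, the binary layers gained a factor of two from the two-way split trick of~\cite{amdur1992message}, in which one half of the recipients learns a value by silence; with $|V|>2$ only one value can be signalled by silence, so this factor-of-two saving is forfeited. Together these two effects convert the binary bounds $2n(t+1)$ and $n(t+1.5)$ into $4n(t+1)\log_2|V|$ and $2n(t+1)\log_2|V|$ respectively, while the $+n^2$ (resp.\ $+2n^2$) overhead of the help/confirmation rounds and the round counts are inherited unchanged from $\CC2$ and $\CC3$. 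Item~3 is then immediate: in the common case all correct committee members receive the same multiset of votes and recommend its plurality, so every process decides on that plurality value.

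The main obstacle I anticipate lies not in any single step but in making modification~(ii) airtight: I must confirm that interpreting a faulty process's silence as the common value cannot create a spurious indistinguishability or an unsafe default in the boundary cases where failures are present. The clean way around this is to observe, as in the binary proofs, that a correct process decides \emph{within} the layer only after a silent confirmation round certifies that all correct processes have decided the same value, and that whenever any correct process reverts to $\BB$ it first alerts the others; thus any misreading induced by a faulty process's silence can only arise along execution paths that hand control to $\BB$, where correctness is guaranteed by $\BB$ itself together with the consistency of the estimate values $est_i$ passed to it---exactly the argument already made for $\CC2$ and $\CC3$.
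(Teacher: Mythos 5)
Your proposal matches the paper's proof in both structure and substance: the paper likewise reduces Decision, Agreement and Validity of $\CC'$ to \cref{thm:2rounds,thm:3rounds}, observing that Validity is the only value-dependent point and that plurality coincides with majority when all correct processes propose the same value, and then redoes only the bit accounting, counting at most $(n-1)(2t+1)\log_2|V|$ (resp.\ $(n-1)(t+1)\log_2|V|$) bits per value-carrying round plus the unchanged `\err'/`\help' overhead of $n^2$ (resp.\ $2n^2$) bits. Two minor slips that do not affect the reduction but are worth fixing: your factor-of-two heuristic applied to the binary bound $n(t+1.5)$ gives $2n(t+1.5)\log_2|V|$, so the stated $\Lthree'$ bound actually requires the direct count $2(n-1)(t+1)\log_2|V|\le 2n(t+1)\log_2|V|$ rather than the heuristic; and in $\Ltwo'$ a correct process decides at time~2 upon a unanimous recommendation \emph{before} any silent confirmation round (the round-3 \SCR{} certifies all-decided only for halting), so the safety of in-layer decisions rests on the fact that any unanimous recommendation includes at least $t+1$ correct committee members---which is exactly the binary Agreement argument you correctly fall back on.
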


\section{Silent Confirmation Rounds}
\label{sec:SCR}
In the first round of~$\Lone$, no communication occurs in the common case. Similarly, the third round of~$\Ltwo$ is silent in the common case. In~$\Lthree$, both the third  and the fourth round are silent in the common case executions. This is no coincidence. In all of these cases, we use the fact that a process receives no messages to convey relevant information at a small communication cost. 
In this section we formalize the role that such silent rounds play in transmitting information. 

The idea is as follows.
Suppose that we are interested in discovering whether a global property of the correct processes holds. 
In particular, we consider a  ``system milestone'' that is composed of local milestones, one for every process. 
Examples of such a milestone are ``all values are~1'' or ``all correct processes have successfully decided.''
Formally, we define the local milestone of a process~$i$ to be the $i$-local fact~$\varphi_i$, and the system milestone to be~$\Gfact{c}{}\triangleq \bigwedge\limits_{\mbox{\tiny correct}\ i }\varphi_i$.
Information about such milestones can be conveyed in silence as follows. 

\begin{definition}\label{def:silentConfirmationRound}
	For every~$i\in\Proc$, let $\varphi_i$ be an $i$-local fact in the system~$\RP=R({\cal P},\modelt)$.
	Denote $\Gfact{c}{}\triangleq \bigwedge\limits_{\mbox{\tiny correct}\ i }\varphi_i$, and fix some time~$m\ge0$.
	We say that~$\Prot$ implements a \textbf{silent confirmation round for $\Gfact{c}{}$} ($\SCRo{\Gfact{c}{}}$ for short) in round~$m+1$ if in every run $r\in\RP$, each correct~$i\in\Proc$ sends 
	messages to everyone in round~$m+1$ in case~$\varphi_i$ does not hold at time~$m$.
\end{definition}

\begin{theorem}\label{thm:silentConfirmationRound}
	Assume that~$\Prot$  implements an $\SCRo{\Gfact{c}{}}$ in round~$m+1$, and fix a run~$r$ of~$\Prot$. 
	A process~$j$ that receives no messages whatsoever in round~$m+1$ knows at time~$m+1$ that $\Gfact{c}{}$ was true at time~$m$. 
\end{theorem}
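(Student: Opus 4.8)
The plan is to unfold \cref{def:know} directly. To establish that process~$j$ knows at time~$m+1$ that $\Gfact{c}{}$ held at time~$m$, it suffices by the definition of knowledge to show that $\Gfact{c}{}$ held at time~$m$ in \emph{every} run $r'\in\RP$ satisfying $r'\ind{j}{m+1}r$. So I would fix an arbitrary such~$r'$. Since a process's local state at time~$m+1$ records the messages it received in round~$m+1$, the equality $r'_j(m+1)=r_j(m+1)$ forces~$j$ to receive no messages whatsoever in round~$m+1$ of~$r'$ as well. The whole problem thus reduces to reading off information about the correct processes of~$r'$ from~$j$'s silence in~$r'$.

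The core step is to show that $\varphi_i$ holds at time~$m$ in~$r'$ for every process~$i$ that is correct in~$r'$, since $\Gfact{c}{}=\bigwedge_{\text{correct }i}\varphi_i$. I would argue this by contraposition from \cref{def:silentConfirmationRound}: if some correct~$i$ had $\varphi_i$ false at time~$m$ in~$r'$, then, because $\Prot$ implements an $\SCRo{\Gfact{c}{}}$ in round~$m+1$ and $r'\in\RP$, process~$i$ sends messages to everyone in round~$m+1$ of~$r'$. In particular~$i$ sends a message to~$j$, which, being sent in round~$m+1$, is delivered by the end of that round, i.e.\ by time~$m+1$. This contradicts the fact that~$j$ receives no messages in round~$m+1$ of~$r'$. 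Hence each such~$\varphi_i$ holds, their conjunction $\Gfact{c}{}$ held at time~$m$ in~$r'$, and since $r'$ was an arbitrary run indistinguishable to~$j$, \cref{def:know} yields exactly the claimed knowledge.

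The one place that needs care, and the main obstacle, is the self-referential conjunct $i=j$ when~$j$ itself is correct: the model has each process send only to \emph{other} processes, so~$j$'s own failure of $\varphi_j$ would not be signalled to~$j$ by any received message, and the delivery argument above does not apply to $i=j$. I would handle this by observing that $\varphi_j$ is a $j$-local fact, so its truth value at time~$m$ is determined by~$j$'s local state and is therefore identical in~$r$ and in every $r'\ind{j}{m+1}r$; the conjunct $\varphi_j$ thus contributes nothing that must be re-verified when ranging over indistinguishable runs, and it suffices to read the broadcast of \cref{def:silentConfirmationRound} as reaching all of~$\Proc$, the sender included, under which reading the uniform contradiction argument covers $i=j$ as well. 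The genuine content of the proof lies entirely in the conjuncts $i\neq j$, handled by the message-delivery argument of the second paragraph.
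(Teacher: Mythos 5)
Your proposal is correct and follows essentially the same route as the paper's own proof: transfer the silence of round~$m+1$ to every run $r'\ind{j}{m+1}r$ via indistinguishability of local states, apply the contrapositive of \cref{def:silentConfirmationRound} to conclude that $\varphi_i$ held at time~$m$ for every correct~$i$ in~$r'$, and then invoke \cref{def:know}. The $i=j$ subtlety you raise in your last paragraph is genuine and is silently glossed over by the paper (whose proof asserts that no correct process sent~$j$ a message and immediately concludes $\varphi_i$ for \emph{all} correct~$i$, including $i=j$); your resolution of reading ``sends messages to everyone'' as including the sender is the right one to adopt, and note that it is doing the real work there, since the locality observation alone only shows that $\varphi_j$ has the same truth value at time~$m$ in~$r$ and in each~$r'$, not that it is actually true in~$r$ itself, which knowledge of $\Gfact{c}{}$ requires.
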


\begin{proof}
	Suppose that the assumptions hold and that~$j$ does not receive any round~$m+1$ message in a run~$r\in \RP$.
	We show that~$j$ knows at time~$m+1$ in~$r$ that~$\Gfact{c}{}$ was true at time~$m$.
	Fix some run~$r'\in \RP$ that is $j$-indistinguishable from~$r$ at time~$m+1$.
	By definition, $j$ does not receive any round~$m+1$ message in~$r'$ (otherwise it would distinguish~$r'$ from~$r$).
	The fact that~$j$ receives no round~$m+1$ messages in~$r'$ means in particular that no correct process 
	sent~$j$ any round~$m+1$ message. 
	Since~$\Prot$ implements an  $\SCRo{\Gfact{c}{}}$ in round~$m+1$, we have by  \cref{def:silentConfirmationRound} that~$\varphi_i$ holds at time~$m$ in~$r'$ for all correct~$i\in\Proc$.
	Consequently, $\Gfact{c}{}$ also holds at time~$m$ in~$r'$.
	Since this is true for every run~$r'\in\RP$  s.t.\ $r'\ind{j}{m+1}r$, it immediately follows by \cref{def:know} that~$j$ knows at time~$m+1$ in~$r$ that~$\Gfact{c}{}$ was true at time~$m$.
\end{proof}

As suggested above, silent confirmation rounds are a key ingredient in the efficiency of our layers. 
They allow processes to obtain crucial information quickly and at no message costs in the common case.
In~\cref{fig:bestCase}, by not sending a message if~$v_i=1$ in line~01, $\Lone$ implements an \SCR\ for the fact $\Gfact{c}{}\triangleq \mbox{``all correct processes propose~1"}$.
\cref{thm:silentConfirmationRound} thus guarantees that a correct process can decide and halt in line~03 because it knows that $\Gfact{c}{}$ holds.
Layer~$\Ltwo$ implements an \SCR\ for the fact $\alldec\triangleq$``all correct processes have decided" in round~3 (line~09 in \cref{fig:SanhBinByzCons-protocol}). In the common case execution of~$\Ltwo$ round~3 is completely silent and a process learns that $\alldec$ at time~2. 
The fact that all correct processes have decided implies that there is no need to execute the base protocol to reach consensus. Hence, a  process that knows that everyone has decided can safely halt. 
This is exactly what happens on line 14 of~$\Ltwo$, following a silent round. 
$\Lthree$ uses two silent confirmation rounds, thereby achieving even better communication efficiency.
A round~3 \SCR\ for ``all received a unanimous recommendation" is implemented in line~09 of~\cref{fig:binByzCons-protocol}, which is followed by a round~4 \SCR\ for $\alldec$ in line~11.

Silent confirmation rounds are not restricted to this paper alone. Indeed, the Atomic Commitment protocols in~\cite{GM18} use silent confirmation rounds to gain communication efficiency. 
Moreover, broadcast-based protocols for radio networks use such rounds to overcome possible malicious behavior (see, e.g., \cite{CDGLNN08,GGN09}). 
Silent confirmation rounds can be used to improve solutions to other distributed problems.
For example, a variety of protocols in the literature make use of long silent phases consisting of~$(t+1)$ rounds or more to verify that a specific milestone has been reached (e.g., \cite{GW17,amdur1992message,HH93}).
The time complexity of these protocols in the common case can easily be reduced by employing a silent confirmation round instead.

\subsection{A General View}
\label{sec:Gen}
Both this paper and~\cite{GM18} present  silence-based primitives and use them to aid the design of efficient protocols.
This section provides a broader view on the uses of silence and primitives based on it.
We first give a generalization that gathers such primitives under the same roof and clarifies their relation.
%In this section we give a generalization that binds such primitives together and clarifies their connection.
Roughly speaking, we wish to capture the general concept of a protocol purposely not sending messages in order to transmit relevant information. 
We now modify  the definition of a silent confirmation round, by making the set of senders and the set of receivers parameters. 
We proceed as follows. 
%Our definition will be 
%when we extend \cref{def:silentConfirmationRound} to include general senders and receivers groups, and think of a more general synchronous environment, we get the following:
%###############%
\begin{definition}[$S$-$T$ silent broadcast]
\label{def:deliberateSilence}
	Let~$\gamma$ be a synchronous message-passing context in which correct processes follow the protocol, and let  $\RP=R(\Prot,\gamma)$. Fix two sets of processes~$S,T\subseteq\Proc$. For every~$i\in S$, let $\varphi_i$ be an $i$-local fact in the system~$\RP$.
	Denote \mbox{$\Gfact{c}{S}\triangleq \bigwedge\limits_{\mbox{\tiny correct}\ i \in S}\varphi_i$}, and fix some time~$m\ge0$.
	We say that~$\Prot$ implements a \textbf{silent broadcast} 
	%(\sibr\ for short) 
	\textbf{of~$\Gfact{c}{S}$ from~$S$ to~$T$}  in round~$m+1$, if in every run $r\in\RP$, each correct~$i\in S$ sends messages to each~$j\in T$ during round~$m+1$ in case~$\varphi_i$ does not hold at time~$m$.
\end{definition}

In an analogous manner to \cref{thm:silentConfirmationRound}, an $S$-$T$ silent broadcast guarantees the information transfer property described below.

\begin{theorem}\label{thm:generalSilence}
	Assume that~$\Prot$  implements silent broadcast of~$\Gfact{c}{S}$ from~$S$ to~$T$  in round~$m+1$, and fix a run~$r$ of~$\Prot$.
	A process~$j\in T$ that receives no messages from~$S$ in round~$m+1$ knows at time~$m+1$ that $\Gfact{c}{S}$ was true at time~$m$. 
\end{theorem}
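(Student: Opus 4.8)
The plan is to mirror the proof of \cref{thm:silentConfirmationRound} almost verbatim, replacing the global quantification over all correct processes with quantification over the correct processes in the sender set~$S$, and restricting attention to messages arriving from~$S$ rather than from everyone. The two theorems are structurally identical; the only genuine change is bookkeeping of which processes count as relevant senders and which fact~$\Gfact{c}{S}$ we are confirming.

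First I would fix a run~$r\in\RP$ and a process~$j\in T$ that receives no round~$m+1$ messages from any process in~$S$, and aim to show that $K_j$ holds for the fact ``$\Gfact{c}{S}$ was true at time~$m$'' at time~$m+1$ in~$r$. By \cref{def:know}, this amounts to showing that $\Gfact{c}{S}$ held at time~$m$ in every run~$r'\in\RP$ with $r'\ind{j}{m+1}r$. So I would fix such an~$r'$ and argue as follows.

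The key step is the indistinguishability argument: since $r'\ind{j}{m+1}r$ and $j$ receives no round~$m+1$ messages from~$S$ in~$r$, the same must be true in~$r'$ --- otherwise $j$'s local state at time~$m+1$ would record a message from some member of~$S$ in exactly one of the two runs, contradicting $r'_j(m+1)=r_j(m+1)$. In particular, no \emph{correct} process in~$S$ sends~$j$ a round~$m+1$ message in~$r'$. Now I invoke \cref{def:deliberateSilence}: because~$\Prot$ implements the silent broadcast of~$\Gfact{c}{S}$ from~$S$ to~$T$, the contrapositive of its defining clause tells us that for each correct~$i\in S$, if~$i$ does \emph{not} send~$j\in T$ a round~$m+1$ message then $\varphi_i$ must hold at time~$m$. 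Hence $\varphi_i$ holds at time~$m$ in~$r'$ for every correct~$i\in S$, and therefore $\Gfact{c}{S}=\bigwedge_{\text{correct } i\in S}\varphi_i$ holds at time~$m$ in~$r'$.

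Since $r'$ was an arbitrary run $j$-indistinguishable from~$r$ at time~$m+1$, \cref{def:know} yields that~$j$ knows at time~$m+1$ in~$r$ that $\Gfact{c}{S}$ was true at time~$m$, as required. \textbf{I do not anticipate a serious obstacle here}; the one point requiring mild care is the direction of the silent-broadcast implication --- the definition guarantees ``$\neg\varphi_i \Rightarrow \text{sends}$,'' and I use its contrapositive ``$\neg\text{sends}\Rightarrow\varphi_i$,'' applied only to correct members of~$S$ and only with respect to the single receiver~$j\in T$ (a weaker hypothesis than ``sends to everyone,'' which is exactly why the generalization to arbitrary~$S,T$ goes through unchanged).
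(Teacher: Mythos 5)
Your proof is correct and is exactly the argument the paper intends: the paper's own ``proof'' is merely the remark that the claim is completely analogous to \cref{thm:silentConfirmationRound} with \cref{def:deliberateSilence} substituted for \cref{def:silentConfirmationRound}, and you have carried out that analogous argument in full, including the two points that need care (indistinguishability preserves ``no messages received from~$S$'' because the receiver knows each sender's identity, and the contrapositive of the silent-broadcast clause applied per correct sender $i\in S$ and the single receiver $j\in T$).
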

\begin{sketch}
The proof is completely analogous to that of \cref{thm:silentConfirmationRound}, except that \cref{def:deliberateSilence} is used instead of \cref{def:silentConfirmationRound}.
\end{sketch}
%###############%
%Many primitives fit under \cref{def:deliberateSilence} and therefore uphold the information transfer property of~\cref{thm:generalSilence}. 

Recall that \cref{def:deliberateSilence} explicitly generalizes \cref{def:silentConfirmationRound}. Indeed, a silent confirmation round is an instance of a silent broadcast from $S$ to~$T$, for $S=T=\Proc$.  Since, in addition, $\modelt$ is a synchronous message-passing context, \cref{thm:generalSilence} implies in particular \cref{thm:silentConfirmationRound}.
With appropriate choices for the parameters~$S$ and~$T$, silent broadcast can capture other familiar methods for information transfer using silence. Thus, for example, a {\em null message} from~$i$ to~$j$ in the sense of Lamport~\cite{lamport1984using} is a silent broadcast from~$S=\{i\}$ to~$T=\{j\}$.
Another example is the  \emph{silent choir} from~\cite{GM18} which is a silent broadcast of \mbox{$\psi\triangleq \bigwedge\limits_{\mbox{\tiny correct}\ i \in S}K_i(v_j=1)$} from a carefully defined set~$S$. %(chosen to be large enough so that~$\Gfact{c}{S}$ implies~$v_j=1$).
The sender set~$S$ in that case is chosen to be large enough as to ensure that it contains a correct process.
The fact~$\psi$ thus implies that~$K_i(v_j=1)$ holds for at least one correct process and thus~$v_j=1$ must be true.
Although \cite{GM18} assumes a crash failure model, their notion of a silent choir applies equally well to the byzantine model addressed in this paper.
In an analogous fashion, we can use the notion of $S$-$T$ silent broadcast to define silent choirs in the \emph{``Cores and Survivor sets"} model of~\cite{JM03}.
In this model, that considers correlated failures, a \emph{Core} set of process never fails completely, and thus in every run the \emph{Survivor set} contains processes from each \emph{Core}.
Setting~$S$ to be a \emph{Core} set ensures that it contains at least one correct process and therefore has the same effect as a silent choir.

%In addition, \cref{thm:generalSilence} applies not just to~$\Gfact{c}{S}$ defined over processes that behave correctly throughout the entire run, but also to all~$i\in S$ that behave correctly at time~$m$ (or round~$m+1$ if you will).
%Thus, in particular, if failures occur in a dynamic fashion (e.g. \cite{JV91}), we have $K_j\varphi_i$ for all currently correct processes. (An illustrative case is an \Active\/ set in the crash-failure model.)

%%%%%%%%%%%%%%%%%%%%%%%%%%%%%%%%%%%%%%%%%%%%%%%%%%%%%%%%%%%%%%%%%%%%%%%%%%%%%%%%%%%%%%%%%%%%%%%%%%%%%%%%%%%%%%%%%%%%%%%%
\section{Conclusions}
\label{sec:discussion}
We have offered a novel approach to improving the performance of Byzantine consensus protocols. 
It consists of a modular method that allows the protocol designer considerable control over the tradeoffs between the behavior of the protocols in the common case and in the uncommon cases. 
A layer custom-designed for the common case is prepended to a Byzantine consensus protocol, to provide the best of both worlds. 
Our layer~$\Ltwo$ achieves optimal decision time in the common case, in two rounds. Its communication costs in this case are better by a factor of $\Omega(n)$ compared to all previously known 
protocols that are similarly time-optimal. The layer~$\Lthree$ guarantees the best known common case bit complexity, improving on the previously known protocols by a factor of 24. Finally, for the 
unanimous common case the Layer~$\Lone$ offers a solution that comes essentially at no cost. Even in the uncommon cases, all of our layers add a very small constant number of rounds, 
and  negligible communication costs. 

While Byzantine protocols are designed to withstand extremely complex and challenging scenarios, in most of their computations none of this materializes, and  no failures occur. By diverting the costs 
of handling complex challenges  to the cases in which they actually occur, our layers allow to reap considerable benefits in the normal course of events. These layers can be used to make Byzantine fault-tolerance practically viable. 

Our work is not the first to address efficiency in the common case in a modular fashion. 
Methodological approaches to handling the distinction between the common case and uncommon cases in the shared-memory domain have been provided in influential works by Kogan, Petrank and Timnat \cite{KP12,Timnat15}. 
In synchronous message-passing systems,  our analysis shows that silent confirmation rounds can be used  to shift complexities from 
the common case to uncommon cases. We are certain that other problem domains in distributed systems can benefit from a similar treatment.

An interesting question left open by our investigation concerns the precise upper and lower bounds on the common-case bit complexity  of Byzantine consensus protocols, and the tradeoff between rounds and communication in this case. 
Despite the four decades of extensive research that have been devoted to consensus and Byzantine consensus, we believe that the field will never fail to raise new interesting and practically relevant questions.

%\section{Acknowledgments}
%We'd like to thank Juan Garay, Seth Gilbert,  Idit Keidar and Erez Petrank for useful discussions on the topic of this paper. 

%%%%%%%%%%%%%%%%%%%%%%%%%%%%%%%%%%%%%%%%%%%%%%%%%%%%%%%%%%%%%%%%%%%%%%%%%%%%%%%%%%%%%%%%%%%%%%%%%%%%%%%%%%%%%%%%%%%%%%%%

\bibliographystyle{plain}
\bibliography{z}

%\newpage
\appendix

%%%%%%%%%%%%%%%%%%%%%%%%%%%%%%%%%%%%%%%%%%%%%%%%%%%%%%%%%%%%%%%%%%%%%%%%%%%%%%%%%%%%%%%%%%%%%%%%%%%%%%%%%%%%%%%%%%%%%%%%

\section{Pseudocode for $\Ltwo$ and $\Lthree$}
\label{sec:pseude}
We start the appendix by providing the full pseudocode for the optimizing layers $\Ltwo$ and $\Lthree$ from \cref{sec:prot2,sec:prot3} respectively.

%==================================================================== 
\begin{figure}[h!] 
	\centering{ 
		\fbox{ 
			\begin{minipage}[t]{150mm} 
				\footnotesize 
				\renewcommand{\baselinestretch}{2.5}
				\setcounter{linecounter}{00} 
				\begin{tabbing} 
					aaaaaaaa\=aa\=aaaaaaa\=\kill  
					{\bf time 0} \\					
					$\forall i\in \Proc$:\\
					\line{AA01} \>		$\forall j\in \left\{0,1,2,...,2t\right\}$ {\bf do}\\%[.7ex]
					\line{AA02} \>		\>{\bf if} $j\bmod 2=v_i$ {\bf then} be silent \quad \% {\em \gnew{when $v_i=0$, send nothing to Even numbered processes}} \%\\%[.1ex]
					\line{AA03} \>		\>{\bf else} send $v_i$ to~$j$\hspace{1.9cm}\quad \% {\em \gnew{when $v_i=1$, send nothing to Odd numbered processes}} \%
					%%%%%%%%%%%%%%%%%%%%%%%%%%%%%%%%%%%%
					\\[1ex]
					{\bf time 1}\\ 					
					$\forall j\in \left\{0,1,2,...,2t\right\}$:\\%[-1ex]			
					\line{AA11} \>		$\texttt{values}_j[i]\gets$
					$\left\{
					\begin{tabular}{ l l }
					$j\bmod 2$ & if no message from~$i$ was received\\
					$(j+1)\bmod 2$ & otherwise \\
					\end{tabular}
					\right\}$ \\%[.7ex]
					\line{AA12} \>		$rec_j\gets \maj(\texttt{values}_j)$\\%[.7ex]
					\line{AA13} \>		$\forall i\in \Proc$ \\%[.7ex]
					\line{AA14} \>		\>{\bf if} $i\bmod 2=rec_j$ {\bf then} be silent \quad \% {\em \gnew{when $rec_j=0$, send nothing to Even numbered processes}} \%\\%[.1ex]
					\line{AA15} \>		\>{\bf else} send $rec_j$ to~$i$ \hspace{1.8cm}\quad \% {\em \gnew{when $rec_j=1$, send nothing to Odd numbered processes}} \% 
					%%%%%%%%%%%%%%%%%%%%%%%%%%%%%%%%%%%%
					\\[1ex]{\bf time 2}\\
					$\forall i\in \Proc$:\\%[-1ex]
					\line{AA21} \>		\udash{{\bf if} 
						received same recommendation~$rec$ from all~$\left\{0,1,...,2t\right\}$ {\bf then}  $est_i\gets rec$; \dec($est_i$) and be silent}\\[1ex]
					%\>\>\>\underline{be silent}\\[.7ex]
					\line{AA22} \>		{\bf else} \quad\hspace{3.9cm}\% {\em \gnew{no identical recommendation}} \% \\%[.1ex]
					\line{AA23} \>		\>{\bf if}
					more than $t$ out of~$\left\{0,1,...,2t\right\}$ recommended value $rec$  {\bf then} $est_i\gets rec$\\%[.1ex]
					\line{AA24} \>		\> {\bf else} $est_i\gets v_i$\quad\hspace{2.45cm}\% {\em \gnew{no legitimate recommendation}} \% \\%[.7ex]
					\line{AA25} \>		\> send `\help' to all
					%%%%%%%%%%%%%%%%%%%%%%%%%%%%%%%%%%%%					
					\\[1ex]{\bf time 3} \textit{and beyond}\\
					$\forall i\in \Proc$:\\%[-1ex]
					\line{AA31} \>		\udash{{\bf if} no `\help' message received {\bf then} halt}\\[1ex]
					\line{AA32} \>		{\bf else}\hspace{.45cm} $\decision\gets \mbox{\fbox{$\BB(est_i)$}}$\\%[.1ex]
					\line{AA33} \>		\hspace{.95cm} {\bf if} undecided after time~2 {\bf then} \dec $(\decision)$
					%%%%%%%%%%%%%%%%%%%%%%%%%%%%%%%%%%%%
				\end{tabbing} 
				\normalsize 
			\end{minipage} 
		} 
		\caption{$\Ltwo$ -- A layer for~$i$ that is optimally fast in the common case.%2 rounds and $2nt$ bits in the common case.
			%The {\em greater Sanhedrin} binary Consensus method.
		}
		\label{fig:SanhBinByzCons-protocol} 
	} 
\end{figure}
%==================================================================== 
%%%%%%%%%%%%
%==================================================================== 
\begin{figure}[h!] 
	\centering{ 
		\fbox{ 
			\begin{minipage}[t]{150mm} 
				\footnotesize 
				\renewcommand{\baselinestretch}{2.5}
				\setcounter{linecounter}{00} 
				\begin{tabbing} 
					aaaaaaaa\=aa\=aaaaaaa\=\kill  
					{\bf time 0} \\					
					$\forall i\in \Proc$:\\%[-1ex]
					\line{AA01} \>		$\forall j\in \left\{0,1,2,...,t\right\}$ {\bf do}\\%[.7ex]
					\line{AA02} \>		\>{\bf if} $j\bmod 2=v_i$ {\bf then} be silent \quad \% {\em \gnew{when $v_i=0$, send nothing to the even group}} \%\\%[.1ex]
					\line{AA03} \>		\>{\bf else} send $v_i$ to~$j$\hspace{1.9cm}\quad \% {\em \gnew{when $v_i=1$, send nothing to the odd group}} \%
					%%%%%%%%%%%%%%%%%%%%%%%%%%%%%%%%%%%%
					\\[1ex]
					{\bf time 1}\\ 					
					$\forall j\in \left\{0,1,2,...,t\right\}$:\\%[-1ex]			
					\line{AA11} \>		$\texttt{values}_j[i]\gets$
					$\left\{
					\begin{tabular}{ l l }
					$j\bmod 2$ & if no message from~$i$ was received\\
					$(j+1)\bmod 2$ & otherwise \\
					\end{tabular}
					\right\}$ \\%[.7ex]
					\line{AA12} \>		$rec_j\gets \maj(\texttt{values}_j)$\\%[.7ex]
					\line{AA13} \>		$\forall i\in \Proc$ \\%[.7ex]
					\line{AA14} \>		\>{\bf if} $i\bmod 2=rec_j$ {\bf then} be silent \quad \% {\em \gnew{when $rec_j=0$, send nothing to even processes}} \%\\%[.1ex]
					\line{AA15} \>		\>{\bf else} send $rec_j$ to~$i$ \hspace{1.8cm}\quad \% {\em \gnew{when $rec_j=1$, send nothing to odd processes}} \% 
					%%%%%%%%%%%%%%%%%%%%%%%%%%%%%%%%%%%%
					\\[1ex]{\bf time 2}\\
					$\forall i\in \Proc$:\\%[-1ex]
					\line{AA21} \>		\udash{{\bf if} 
						received a unanimous recommendation~$rec$ from~$\left\{0,1,...,t\right\}$ {\bf then} $est_i\gets rec$}\quad\% {\em \gnew{send nothing}} \% \\[1ex]
					\line{AA22} \>		{\bf else} $est_i\gets v_i$; \ send `\err' to all;\qquad\hspace{3cm} \% {\em \gnew{no legitimate recommendation}} \%
					%%%%%%%%%%%%%%%%%%%%%%%%%%%%%%%%%%%%
					\\[1ex]{\bf time 3}\\
					$\forall i\in \Proc$:\\%[-1ex]
					\line{AA31} \>		\udash{{\bf if} did not receive any `\err' {\bf then} \dec($est_i$)} \qquad\quad\% {\em \gnew{send nothing}} \% \\[1ex]
					\line{AA32} \>		{\bf else} send `\help' to all	
					%%%%%%%%%%%%%%%%%%%%%%%%%%%%%%%%%%%%					
					\\[1ex]{\bf time 4} \textit{and beyond}\\
					$\forall i\in \Proc$:\\%[-1ex]
					\line{AA41} \>		\udash{{\bf if} did not receive any `\help' {\bf then} halt}\\[1ex]
					\line{AA42} \>		{\bf else}\hspace{.45cm} $\decision\gets \mbox{\fbox{$\BB(est_i)$}}$\\%[.1ex]
					\line{AA43} \>		\hspace{.95cm} {\bf if} undecided after time~3 {\bf then} \dec $(\decision)$
					%%%%%%%%%%%%%%%%%%%%%%%%%%%%%%%%%%%%
				\end{tabbing} 
				\normalsize 
			\end{minipage} 
		} 
		\caption{$\Lthree$ -- 3 rounds and $nt$ bits in the common case.
			%The {\em smaller Council} Byzantine Consensus method.
		}
		\label{fig:binByzCons-protocol} 
	} \vspace{7ex}
\end{figure}
%====================================================================
\newpage
%%%%%%%%%%%%%%%%%%%%%%%%%%%%%%%%%%%%%%%%%%%%%%%%%%%%%%%%%%%%%%%%%%%%%%%%%%%%%%%%%%%%%%%%%%%%%%%%%%%%%%%%%%%%%%%%%%%%%%%%

\section{Full Proofs}
\label{sec:proofs}

%%%%%%%%%%%%%%%%%%%%%%%%%%%%%%%%%%%%%%%%%%%%%%%%%%%%%%%%%%%%%%%%%%%%%%%%%%%%%%%%%%%%%%%%%%%%%%%%%%%%%%%%%%%%%%%%%%%%%%%%

The proof of \cref{thm:1round} makes use of the following lemma:
\begin{lemma}\label{lem:helping-decision1}
	Fix a run~$r$ of $\CC1=\Lone\odot\BB$ and let~$i$ be a correct process in~$r$.
	If a correct process~$i$ does not decide at time~1, then all the correct processes participate in the $\BB$ phase from time~1 on. 
\end{lemma}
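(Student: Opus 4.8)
The plan is to translate the hypothesis ``$i$ does not decide at time~1'' into a quantitative bound on the number of \err\ messages that~$i$ receives in round~1, and then to exploit the broadcast discipline of~$\Lone$ together with a simple counting argument. Throughout, I would work in the system $R_{\CC1}$ and use that at most~$t$ processes are faulty in~$r$.

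First I would note that in~$\Lone$ a correct process decides \emph{at time~1} only through line~11 (it received no \err) or line~12 (it received at most~$t$ \err\ messages); both lead to a decision of~$1$. Consequently, if the correct process~$i$ fails to decide at time~1, it must have received strictly more than~$t$ \err\ messages during round~1. Since at most~$t$ of the senders can be faulty, among these more than~$t$ senders at least one, call it~$j$, is correct. By lines~01--02, a correct process sends \err\ precisely when its initial value is~$0$, and when it does so it sends \err\ \emph{to all} processes. Hence~$j$ has $v_j=0$ and broadcast \err\ to everyone, so \emph{every} correct process received at least one \err\ message in round~1.

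To conclude, I would observe that a correct process skips the $\BB$ phase only by halting at line~11, which requires receiving \emph{no} \err\ message. Having just shown that every correct process receives at least one \err\ message, none of them satisfies the condition of line~11; each therefore falls through to line~15 and executes $\BB(est_i)$ from time~1 onward, which is exactly the assertion of the lemma.

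The argument is short, so there is no serious obstacle; the only point needing care is the bookkeeping that separates ``deciding at time~1'' (lines~11--12) from ``participating in~$\BB$'' (line~15) --- a correct process may well decide~$1$ at line~12 and \emph{still} run~$\BB$, whereas only line~11 causes an early halt. The genuinely load-bearing step is the pairing of the threshold $>t$ with the ``send to all'' rule: the former forces the existence of at least one correct \err-sender, and the latter forces that sender to notify every correct process, which is what rules out any line~11 halt.
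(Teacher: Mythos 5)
Your proof is correct and follows essentially the same route as the paper's: both deduce from the failure to decide at time~1 that~$i$ received more than~$t$ \err\ messages, conclude that at least one sender is correct and therefore broadcast \err\ to everyone, and hence that no correct process can take the receive-nothing-and-halt branch, so all proceed to~$\BB$. Your additional bookkeeping (distinguishing deciding~1 at the $\le t$ threshold from halting early, which only the no-\err\ branch permits) is exactly the observation the paper records just before its proof.
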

Observe that by line~03, a process that receives at least one `\err' message participates in the base protocol.

\begin{proof}%[proof A]
	Let~$r$ and~$i$ satisfy the assumptions.
	By lines~03 and~04, if~$i$ is undecided at time~1 it received at least $t+1$ `\err' messages in the first round.
	At least one of them must be from a correct process, who sent `\err' to everyone.
	The claim follows.
\end{proof}

\oneround*

\begin{proof}%(\cref{thm:1round})
	We now prove that $\CC1$ is a binary consensus protocol. Fix a run~$r$ of $\CC1$. We show that~$r$ satisfies Decision, Validity and Agreement:\\
		\textbf{Decision} -- Let~$i$ be a correct process in~$r$. If~$i$ decides at time~1 we are done. If it doesn't, then by \cref{lem:helping-decision1} all correct processes participate in the $\BB$ phase. By the Decision property of $\BB$, process~$i$ completes the execution of line~07 and decides on line~08.\\		
		\textbf{Validity} -- Assume that all correct processes propose the same value~$v$ in~$r$. If~$v=1$, then every correct process sends nothing in the first round by line~01 and therefore, a correct process receives at most~$t$ `\err' messages and decides~1 at time~1 (by lines~03 or~04).
		If on the other hand, $v=0$, then since $n>3t$ a correct process receives more than~$2t$ `\err' messages and therefore performs~$est_i\gets 0$ at time~1 by line~06. All correct processes then start the base protocol with a proposal of~0 (line~07) and by its Validity decide~0.\\				
		\textbf{Agreement} -- let~$i,j\in\Proc$ be correct processes in~$r$. If both decide due to the base protocol they have the same decision from its Validity. A correct process that decides at time~1 never changes its mind and can only decide~1 (there is no 0~decisions at this time), therefore, if both $i$ and~$j$ decide at time~1 they decide the same. We are left to show that if (w.l.o.g.) $i$ decides at time~1 and~$j$ decides later using the base protocol, then their decisions are the same.
		Since~$j$ is correct and undecided at time~1, \cref{lem:helping-decision1} states that all correct processes participate in the base protocol by line~07.
		Additionally, since~$i$ decides at time~1 it decides~1. Hence, $i$ received at most~$t$ `\err' messages, which implies that all correct processes received at most~$2t$ `\err' messages, performed $est\gets 1$ on line~05 and entered the base protocol with $est=1$. From Validity of the base protocol this ensures that~$j$ performed $\decision\gets 1$ and thus $j$'s decision on line~08 is~1
	\begin{enumerate}	
		\item In the unanimous common case, all processes propose~1 and no failures occur. Thus, at time~0 all is silent and no messages are sent. Consequently, at time~1 all processes receive no `\err' messages, and so they decide and halt without exchanging any messages.
		
		\item An `\err' message can be implemented using a single bit. In the worst case, every process sends one bit to all others at time~0, incurring a total cost of~$n(n-1)$ bits. After this, all remaining communication is due to the base protocol.% no correct process halts and they all revert to base protocol at time~1.		
	\end{enumerate}
\end{proof}
%%%%%%%%%%%%%%%%%%%%%%%%%%%%%%%%%%%%%%%%%%%%%%%%%%%%%%%%%%%%%%%%%%%%%%%%%%%%%%%%%%%%%%%%%%%%%%%%%%%%%%%%%%%%%%%%%%%%%%%%
The proofs of \cref{thm:2rounds,thm:3rounds} make use of the following lemma:
%\begin{lemma}\label{lem:helping-decision2}
%	Let~$r$ be a run of $\CC2$ and let~$i$ be a correct process in~$r$.
%	In~$r$, if~$i$ does not decide at time~2, then all the correct processes participate in the $\BB$ phase from time~3 on. 
%\end{lemma}
\begin{lemma}\label{lem:helping-decision}%B 23
	Fix a run~$r$ of $\CC2=\Ltwo\odot\BB$ (resp.~$\CC3=\Lthree\odot\BB$).
	If a correct process~$i$ does not decide at time~2 (resp.~3), then all the correct processes participate in the $\BB$ phase from time~3 on (resp.~4 on). 
\end{lemma}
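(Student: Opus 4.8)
The plan is to prove both statements by a single forward trace through the pseudocode, mirroring the argument already used for \cref{lem:helping-decision1}. The one driving observation is that an undecided correct process is funneled into the ``else'' branch of its round, where it is instructed to broadcast a `\help' message to everyone; since a correct sender's round-$(m+1)$ messages are delivered by time~$m+1$ in the synchronous model, this `\help' reaches \emph{every} correct process, which in turn forces each of them past its halt guard and into the boxed call to~$\BB$. I would therefore isolate, in each layer, three facts: (i) being undecided at the critical time implies taking the else branch; (ii) that branch sends `\help' to all; and (iii) receiving any `\help' suppresses the halt and triggers~$\BB$.

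For $\CC2=\Ltwo\odot\BB$ I would argue as follows. Suppose a correct process~$i$ does not decide at time~2. The only decision instruction at time~2 in \cref{fig:SanhBinByzCons-protocol} is line~09, and its guard---that~$i$ received the same recommendation from every member of $\{0,\dots,2t\}$---must therefore fail, so~$i$ executes the else branch (lines~10--13) and, by line~13, sends `\help' to all during round~3. Because~$i$ is correct, every correct process receives this `\help' by time~3, so the line~14 guard, which halts a process that received no `\help', fails for each of them; consequently every correct process proceeds to line~15 and evaluates $\BB(est)$, i.e.\ enters the $\BB$ phase from time~3 on. This conclusion also covers processes that \emph{did} decide at time~2: they too receive~$i$'s `\help', skip the halt, and run~$\BB$ (merely leaving their decision unchanged via line~16).

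The case $\CC3=\Lthree\odot\BB$ is identical in structure, shifted by one round. If a correct~$i$ does not decide at time~3, then the line~11 guard of \cref{fig:binByzCons-protocol}, which decides exactly when no `\err' was received---the sole decision instruction at time~3---must have failed, so~$i$ received some `\err' and, by line~12, sends `\help' to all in round~4. Correctness of~$i$ then guarantees delivery to every correct process by time~4, failing each process's line~13 halt guard, so that all correct processes reach line~14 and run $\BB(est)$ from time~4 on.

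The only point requiring care---and the closest thing to an obstacle---is the participation of~$i$ itself: as an undecided correct process it must not halt (else Decision would be violated), so I would rely on the convention that ``send to all'' includes the sender, equivalently that~$i$ registers its own `\help', which guarantees~$i$ also fails the halt guard and enters~$\BB$. Everything else reduces to the standard synchronous-delivery guarantee for a correct sender, so beyond this bookkeeping the lemma is a mechanical reading of the branch structure.
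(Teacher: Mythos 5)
Your proof is correct, but it takes a genuinely different route from the paper's. For \cref{lem:helping-decision} the paper does \emph{not} repeat the direct trace it used for \cref{lem:helping-decision1}; instead it invokes its silent-confirmation-round machinery: line~09 of $\Ltwo$ (resp.\ line~11 of $\Lthree$) implements an \SCRo{\alldec} for $\alldec\triangleq$``all correct processes have decided'' in round~3 (resp.~4), the halt guard on line~14 (resp.~13) lets a correct~$j$ halt only if that round was silent for it, so by \cref{thm:silentConfirmationRound} halting would give~$j$ knowledge that $\alldec$ held --- and since the undecided~$i$ makes $\alldec$ false, the Knowledge property rules this out, hence no correct process halts and all enter~$\BB$. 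Your argument is the operational contrapositive of this: the undecided~$i$ is funneled into the else branch, broadcasts `\help', and synchronous delivery defeats every correct process's halt guard. The content is equivalent, and yours is arguably more elementary and self-contained --- it mirrors the style of \cref{lem:helping-decision1} and needs no knowledge formalism --- whereas the paper's version is deliberately phrased to showcase the \SCR\ theorem (the paper's conceptual contribution) and to argue uniformly from the falsity of $\alldec$ rather than from tracking which particular message reaches whom. Two small remarks: your caveat about~$i$ registering its own `\help' is a real bookkeeping point, and flagging it is to your credit --- the paper's proof glosses over it (an undecided correct process plainly cannot halt, since it still needs~$\BB$ to produce a decision, which is how the paper's reading resolves it); and note that the paper's appeal to \cref{thm:silentConfirmationRound} quietly equates ``received no `\help''' with ``received no round-3 messages whatsoever,'' a mismatch your direct argument avoids entirely, since you only ever use the forward direction that a correct sender's `\help' is delivered.
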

%\begin{proof}%[proof A]
%	Let~$r$ and~$i$ satisfy the assumptions. Then $\CC2$ dictates that~$i$ will send a round~2 `\help' message to all.
%	Since~$i$ is correct in~$r$ and communication channels are reliable, all correct processes receive~$i$'s `\help' message by time~3.
%	Consequently, in~$r$, correct processes do not halt at time~3 but rather participate in the $\BB$.
%\end{proof}
\begin{proof}%[proof B] \red{(uses \SCR)}
	Let~$r$ and~$i$ satisfy the assumptions, and let~$j$ be a correct process in~$r$.
	Line~09 of $\Ltwo$ (resp.\ line~11 of $\Lthree$) implements a \SCR\ for the global fact %$\alldec\triangleq \bigwedge\limits_{\mbox{\tiny correct}\ i }\mbox{\em ``$i$ has decided"}$
	$\alldec\triangleq$``all correct processes have decided" in round~3 (resp.\ in round~4).
	Suitably, line~14 (resp.\ line~13) dictates that~$j$ halts and does not participate in the base protocol only if~$j$ received no round~3 (resp.\ round~4) messages whatsoever.
	By line~14 (resp.~13) and~\cref{thm:silentConfirmationRound} we have that~$j$ participates in the base protocol unless it knows at time~3 (resp.\ time~4) that $\alldec$ was true at time~2 (resp.~3).
	Since~$i$ does not decide at time~2 (resp.~3) in~$r$, then $\alldec$ is not true at time~2 (resp.~3). By the knowledge property, $j$ does not know that $\alldec$ was true at time~2 (resp.~3), because it is false.
	Consequently, no correct process~$j$ halts at time~3 (resp.~4) in~$r$, and they all  participate in the $\BB$ phase from time~3 (resp.~3) on. 
\end{proof}
\tworounds*
\begin{proof}%(\cref{thm:2rounds})
	We now prove that $\CC2$ is a binary consensus protocol. Fix a run~$r$ of $\CC2$. We show that~$r$ satisfies Decision, Validity and Agreement:\\
	\textbf{Decision} -- Let~$i$ be a correct process in~$r$. If~$i$ decides at time~2 we are done. If it doesn't, then by \cref{lem:helping-decision} all correct processes participate in the $\BB$ phase. By the Decision property of $\BB$, process $i$ completes the execution of line~15 and decides on line~16.\\
	\textbf{Validity} -- Let~$i$ be a correct process in~$r$ and assume that all correct processes propose the same value~$v$. Recall that, since $n>3t$ by assumption, the correct processes consist of a strict majority. 
	By the pigeonhole principle, at least $t+1$ processes from the greater Sanhedrin are correct. 
	These correct Sanhedrin members follow the protocol on lines~04 and~05 and compute the majority of votes reported to them, which is~$v$, thus, they recommend to all on~$v$ by lines~07 and~08.
	Thereafter, by time~2, every correct process receives at least those $t+1$ recommendations on~$v$ and sets its estimation to~$v$ either by line~09 (in case of a unanimous recommendation), or by line~11.
	If~$j$ decides at time~2, it decides on its estimation~$v$ by line~09, and we are done. Assume it didn't, then by \cref{lem:helping-decision} $j$ and all other correct process participate in the base protocol on line~15.
	As we have shown, the estimation of all correct processes is set to~$v$ on lines~09 and~11. Thus, all correct processes enter the base protocol with a proposal of~$v$.
	From Validity of the base protocol, this ensures that $j$ performs $\decision\gets v$ on line~15 and decides on~$v$ in line~16.\\
	\textbf{Agreement} -- Let~$i$ and~$j$ be correct processes in~$r$. Assume w.l.o.g.\ that~$i$ does decides no later than~$j$.
	If~$i$ does not decide at time~2 then both it and~$j$ participate in the base protocol and decide according to it. In particular, their decisions satisfy Agreement. 
	Let us assume that~$i$ decides at time~2 on~$v$. Specifically, line~09 is the only line in which a correct process decides at time~2.
	A correct process (such as~$i$) decides in line~09 iff it received a unanimous recommendation on~$v$.
	Recall that  every unanimous recommendation includes a report of at least $t+1$ correct processes. It follows that every correct process receives at least $t+1$ recommendations on~$v$ and therefore sets its estimation to~$v$ in lines~09 or~11. 
	Moreover, since no unanimous recommendation on~$u\ne v$ is possible, if~$j$ also decides at time~2, then it decides on $v$ as well, and Agreement holds. 
	If~$j$ does not decide at time~2, then, by~\cref{lem:helping-decision}, all correct processes participate in the $\BB$ phase. And, since all correct processes fixed their estimations to~$v$ at time~2, they all enter the base protocol with~$est=v$. The Validity of the base protocol ensures that $j$ will decide on the value~$v$ in line~16, ensuring Agreement.
\begin{enumerate}		
	\item %\gnew{In failure-free runs,  decisions occur after 2 rounds and at most $2n(t+1)$ bits are communicated, while}
	In a failure-free run at time~0 every process transmits its proposal to half of the Sanhedrin by silence and the other half by messages. Sanhedrin members have one less message to send in half the cases (to themselves), thus at most a total of $n\lceil(2t+1)/2\rceil-\lfloor(2t+1)/2\rfloor\le n(t+1)$ bits are sent during the first round.
	Since no failures occur $\maj (\texttt{values}_j)$ is the same for every Sanhedrin member~$j\in \{0,...,2t\}$ and they all recommend the same value~$v=\maj (\texttt{values})$.
	At time~1, by lines~07 and~08, Sanhedrin members send their recommendations on~$v$ to half of the processes by silence and the other half by messages. Thus, sending at most a total of $(2t+1)\lceil(n-1)/2\rceil\le n(t+1)$ bits in the second round.
	The unanimous recommendation of the second round causes every~$i\in\Proc$ to decide~$v$ at time~2 by line~09, remain silent in round~3 and halt at time~3 by line~14.
	
	\item Again, correct processes send their proposals to the Sanhedrin at a cost of at most $n(t+1)$ bits in the first round, and correct Sanhedrin members send their recommendations to processes with a total cost of at most $n(t+1)$ bits in the second round.
	The difference lays in the third round, when correct processes may not receive a unanimous recommendation and would therefore send \help\ messages (that ca be implemented using a single bit) by line~13. This costs in the worst case $n(n-1)$ bits. After this, all remaining communication is due to the base protocol.
	
	\item %\gnew{CC2 decides on the majority value in the common case.}
	In the common case of a failure-free run, all processes transmit their proposals according to protocol at time~0 and a Sanhedrin member calculates their majority at time~1 on lines~04 and~05. The majority value~$v$ is unique and therefore all Sanhedrin members send the same recommendations on~$v$ by lines~07 and~08 at time~1. All processes receive the unanimous recommendation on $v$ by time~2 and therefore decide on it in line~09.
\end{enumerate}
\end{proof}

\threerounds*
\begin{proof}
	We now prove that $\CC3$ is a binary consensus protocol. Fix a run~$r$ of $\CC3$. We show that~$r$ satisfies Decision, Validity and Agreement:\\
	\textbf{Decision} -- Let~$i$ be a correct process in~$r$. If~$i$ decides at time~3 we are done. If it doesn't, then by \cref{lem:helping-decision} all correct processes participate in the base protocol. By the Decision property of the base protocol, process $i$ completes the execution of line~14 and decides on line~15.\\	
	\textbf{Validity} -- Let~$i$ be a correct process in~$r$ and assume that all correct processes propose the same value~$v$. Recall that, since $n>3t$ by assumption, the correct processes consist of a strict majority.
	By the pigeonhole principle, at least one process $j_c\in\left\{0,1,2,...,t\right\}$ is correct. 
	Process~$j_c$ follows the protocol and at time~1 on lines~04 and~05 it computes that the majority of votes as reported to it. Denote this value~$v$. 
	Consequently, by lines~07 and~08 $j_c$ recommends on~$v$ in the second round.
	Thereafter, at time~2 every correct process receives~$j_c$'s recommendation on~$v$ and therefore sets its estimation to~$v$ ($est\gets v$), either in line~09 due to a unanimous recommendation, or in line~10 because its own initial value is~$v$.
	If~$i$ decides at time~3, by line~11 it decides on its estimation, which we have established is~$v$. The only other option for~$i$ to decide is on line~15 by using the base protocol.
	It remains to show that if~$i$ decides using the base protocol, then its decision is also~$v$.
	Assume that~$i$ decides using the base protocol.
	Since~$i$ is a correct process that does not decide at time~3, by~\cref{lem:helping-decision} all correct processes participate in the base protocol.
	As we have shown, the estimation of every correct process is set to~$v$ at time~2 by lines~09 and~10, and so 
	all correct processes enter the base protocol on line~14 at time~4 with the proposal~$v$.
	The Validity of the base protocol ensures that $i$ sets $\decision\gets v$ on line~14 and that $i$ decides~$v$ on line~15. Hence, we are done.\\	
	\textbf{Agreement} -- Let~$i$ and~$j$ be correct processes in~$r$. Assume w.l.o.g.\ that~$i$ decides no later than~$j$. 
	If~$i$ does not decide at time~3 then both it and~$j$ participate in the base protocol and decide according to it. In particular, their decisions satisfy Agreement.
	Let's assume that~$i$ decides at time~3 on a value~$v$.
	Protocol $\CC3$ implements a silent confirmation round for the global fact $\Gfact{c}{}\triangleq$``a unanimous recommendation was received by all correct processes" in round~3 (lines~09 and~10).
	The \SCR\ information transfer guarantees of \cref{thm:silentConfirmationRound} and by line~11 at time~3, imply that $i$ decides at time~3 only if $\Gfact{c}{}$ was true at time~2. In particular,if $i$ decides by line~11, it decides on its estimate value~($est_{i}=v$).
	Recall that every unanimous recommendation includes at least one correct process' recommendation which it recommended to all. It follows that if two correct processes receive unanimous recommendations, then these recommendations are the same.
	Thus, the $\SCRo{\Gfact{c}{}}$ in round~3 informs~$i$ that all correct processes have their estimations set to~$v$.
	If~$j$ also decides at time~3, line~11, then it decides on its estimation~$v$, and Agreement holds.
	If~$j$ does not decide at time~3, then, by~\cref{lem:helping-decision}, all correct processes participate in base protocol. Moreover, since all correct processes have the same estimate~$v$, they all propose~$v$ to the base protocol in line~14.
	Validity of the base protocol guarantees that $\decision\gets v$ in line~14 and~$j$ decides~$v$ by line~15, ensuring Agreement.
\begin{enumerate}		
	\item %\gnew{In failure-free runs,  decisions occur after 3 rounds and at most $n(t+1.5)$ bits are communicated, while}
	In a failure-free run at time~0 every process transmits its proposal to half of the Council by silence and to the other half by messages. Council members have one less message to send in half the cases (to themselves), thus at most a total of $n\lceil(t+1)/2\rceil-\lfloor(t+1)/2\rfloor\le n(t+2)/2 -t/2 $ bits are sent during the first round.
	Since no failures occur $\maj (\texttt{values}_j)$ is the same for every council member~$j\in \{0,...,t\}$ and they all recommend the same value~$v=\maj (\texttt{values})$.
	At time~1, lines~07 and~08, council members send their recommendation on~$v$ to half of the processes by silence and to the other half by messages. Thus, sending at most a total of $(t+1)\lceil(n-1)/2\rceil\le n(t+1)/2$ bits in the second round.		
	The unanimous recommendation on~$v$ of the second round causes every~$i\in\Proc$ to set its estimate to~$est_i\gets v$ at time~2, and remain silent in round~3. Thus, in the common case, no message is sent in round~3.
	At time~3, no message is received and in particular no `\err' message, therefore, every process decides on its estimate, remains quiet in round~4 and halts at time~4.
	In conclusion, the total number of messages/bits sent in a failure-free run of~$\CC3$ is at most $n(t+2)/2 -t/2 + n(t+1)/2<n(t+1.5)$.
	
	\item %\gnew{When failures occur, at most $n(t+1)+2n^2$ bits are sent and 4 rounds elapse before reverting to the base protocol.}
	Again, correct processes send their proposals to the council at a cost of at most $n(t+2)/2 -t/2$ bits in the first round, and correct council members send their recommendations to processes with a total cost of at most $n(t+1)/2$ bits in the second round.
	The difference lays in the third and fourth rounds, when correct processes may not receive a unanimous recommendation and would therefore send `\err' messages by line~10 in the third round and \help\ messages by line~12 in the fourth round (each message can be implemented using a single bit). In the worst case, this adds a cost of $2n(n-1)$ bits in rounds~3 and~4. After this, starting at time~4, all remaining communication is due to the base protocol.
	
	\item %\gnew{Common case decisions are a majority vote.}
	In the common case of a failure-free run, all processes transmit their proposals according to protocol at time~0.
	A council member calculates the correct majority value~$v$ at time~1 on lines~04 and~05. The majority value~$v$ is unique and therefore all council members send the same recommendation~$v$ at time~1 by lines~07 and~08. All processes receive the unanimous recommendation on~$v$ by time~2 and therefore set their estimation to~$v$ by line~09 and remain silent. In round~3, no messages are sent in the common case, in particular no `\err' messages. Therefore, every process decides in line~11 on its estimate~$v$ which is the majority value.
\end{enumerate}
\end{proof}

%%%%%%%%%%%%%%%%%%%%%%%%%%%%%%%%%%%%%%%%%%%%%%%%%%%%%%%%%%%%%%%%%%%%%%%%%%%%%%%%%%%%%%%%%%%%%%%%%%%%%%%%%%%%%%%%%%%%%%%
\section{Multivalued Layers}
\label{sec:MV-layer}
In multi-valued consensus, $|V|\ge3$ and the splitting technique used by $\Ltwo$ and $\Lthree$ for broadcasting values needs to be modified. The resulting technique becomes more cumbersome and less efficient. We design two layers $\Ltwo'$ and~$\Lthree'$ for multi-valued consensus that closely resemble $\Ltwo$ and~$\Lthree$, respectively. The new layers differ from the original ones in two minor ways. One is that the majority computation on line 05 of the original layers is replaced by a plurality computation, which chooses a value that appears most frequently. The other difference is that the 
broadcasting of values is implemented in a more straightforward manner: In both the first and second rounds, processes encode by silence only a common proposal~$\vcom_i$ and decision~$\deccom$, while broadcasting the rest of the values explicitly. No further changes are needed. \cref{fig:MulCons2,fig:MulCons3} shows~$\Ltwo'$ and~$\Lthree'$.
The function $\plural(\cdot): V^n\rightarrow V$ is defined:
\[
\plural(\textbf{v})\triangleq \mbox{most common~$v$ in \textbf{v} with some arbitrary tie breaker} 
.
\]

\mvLayers*

Proving $\CC2'$ (resp.\ $\CC3'$) is a consensus protocol stems directly from the proof for $\CC2$ in \cref{thm:2rounds} (resp.\ $\CC3$ in\cref{thm:3rounds}). The only minor modification is in Validity, replacing majority with plurality.
However, since when all correct processes propose the same value~$v$ both plurality and majority have the same result.
Therefore, Validity is maintained for the multi-valued as well.
We are thus left only with proving the rest:
\begin{proof}(for $\CC2'$)
\begin{enumerate}
	\item In a failure-free run at time~0 every process transmits its proposal to all Sanhedrin members by messages or silence (a messages can encode any value by at most $\log_2|V|$ bits). Sanhedrin members have one less message to send (to themselves). A message encodes a value by~$\log_2|V|$ bits. Thus a total of at most $(n-1)(2t+1)\log_2|V|$ bits are sent during the first round.
	Since no failures occur $\plural (\texttt{values}_j)$ is the same for every Sanhedrin member~$j\in \{0,...,2t\}$ and they all recommend the same value~$v=\plural (\texttt{values})$.
	At time~1, Sanhedrin members send their recommendations on~$v$ to all processes. Thus, sending at most $(2t+1)(n-1)\log_2|V|$ bits in the second round.
	The unanimous recommendation of the second round causes every~$i\in\Proc$ to decide~$v$ at time~2 by line~07, remain silent in round~3 and halt at time~3 by line~12.
	
	\item Again, correct processes send their proposals to the Sanhedrin at a cost of at most $(n-1)(2t+1)\log_2|V|$ bits in the first round, and correct Sanhedrin members send their recommendations to processes with a cost of at most $(2t+1)(n-1)\log_2|V|$ bits in the second round.
	The difference lays in the third round, when correct processes may not receive a unanimous recommendation and would therefore send \help\ messages (that can be implemented using a single bit) by line~11. This costs in the worst case $n(n-1)$ bits. After this, all remaining communication is due to the base protocol.
	
	\item In the common case of a failure-free run, all processes transmit their proposals according to protocol at time~0 and a Sanhedrin member calculates their plurality at time~1 on lines~03 and~04. The plurality value~$v$ is unique and (a known tie-breaker exists), therefore all Sanhedrin members send the same recommendations on~$v$ by lines~05 and~06 at time~1. All processes receive the unanimous recommendation on $v$ by time~2 and therefore decide on it in line~07.
	
\end{enumerate}
\end{proof}

\begin{proof}(for $\CC3'$)
	\begin{enumerate}
		\item In a failure-free run at time~0 every process sends its proposal the council by messages or silence (a messages can encode any value by at most $\log_2|V|$ bits). Council members have one less message to send (to themselves), thus at most $(n-1)(t+1)\log_2|V|$ bits are sent during the first round.
		Since no failures occur $\plural (\texttt{values}_j)$ is the same for every council member~$j\in \{0,...,t\}$ and they all recommend the same value~$v=\plural (\texttt{values})$.
		At time~1, lines~05 and~06, council members send their recommendation on~$v$ to all the processes. Thus, sending at most $(t+1)(n-1)\log_2|V|$ bits in the second round.		
		The unanimous recommendation on~$v$ of the second round causes every~$i\in\Proc$ to set its estimate to~$est_i\gets v$ at time~2, and remain silent in round~3. Thus, in the common case, no message is sent in round~3.
		At time~3, no message is received and in particular no `\err' message, therefore, every process decides on its estimate, remains quiet in round~4 and halts at time~4.
			
		\item Again, correct processes send their proposals to the council at a cost of at most $(n-1)(t+1)\log_2|V|$ bits in the first round, and correct council members send their recommendations to processes with a total cost of at most $(t+1)(n-1)\log_2|V|$ bits in the second round.
		The difference lays in the third and fourth rounds, when correct processes may not receive a unanimous recommendation and would therefore send `\err' messages by line~08 in the third round and \help\ messages by line~10 in the fourth round (each message can be implemented using a single bit). In the worst case, this adds a cost of $2n(n-1)$ bits in rounds~3 and~4. After this, starting at time~4, all remaining communication is due to the base protocol.
		
		\item In the common case of a failure-free run, all processes transmit their proposals according to protocol at time~0 and a council member calculates their plurality at time~1 on lines~03 and~04. The plurality value~$v$ is unique and (a known tie-breaker exists), therefore all council members send the same recommendations on~$v$ by lines~05 and~06 at time~1. All processes receive the unanimous recommendation on $v$ by time~2 and by line~07 set their estimate to~$v$ and remain silent. Thereafter, at time~3 by line~09 the processes decide on~$v$.
	\end{enumerate}
\end{proof}
%==================================================================== 
\begin{figure}[tbh] 
	\centering{ 
		\fbox{ 
			\begin{minipage}[t]{150mm} 
				\footnotesize 
				\renewcommand{\baselinestretch}{2.5}
				\setcounter{linecounter}{00} 
				\begin{tabbing} 
					aaaaaaaa\=aa\=aaaaaaa\=\kill  
					{\bf time 0} \\					
					$\forall i\in \Proc$:\\%[-1ex]
					\line{AA01} \>		{\bf if} $v_i = \vcom_i$ {\bf then} be silent \quad\hspace{.5cm} \% {\em \gnew{$\vcom_i$ -- a common proposal of~$i$}} \% \\%[.1ex]
					\line{AA02} \>		{\bf else} send $v_i$ to processes~$\left\{0,1,2,...,2t\right\}$\quad \% \gnew{$v_i\ne \vcom_i$} \%
					%%%%%%%%%%%%%%%%%%%%%%%%%%%%%%%%%%%%
					\\[1ex]
					{\bf time 1}\\ 					
					$\forall j\in \left\{0,1,2,...,2t\right\}$:\\%[-.1ex]			
					\line{AA11} \>					\>\> $\texttt{values}_j[i]\gets$
					$\left\{
					\begin{tabular}{ l l }
					$\vcom_i$ & if no message from~$i$ was received\\
					proposal received from~$i$ & otherwise \\
					\end{tabular}
					\right\}$ \\%[.1ex]
					\line{AA12} \>		$rec_j\gets \plural(\texttt{values}_j)$ \quad \% {\em \gnew{$\deccom$ -- a common plurality result}} \%\\%[.1ex]
					\line{AA13} \>		{\bf if} $rec_j=\deccom$ \ {\bf then} be silent \quad\quad \% {\em \gnew{recommendation on $\deccom$}} \% \\%[.1ex]
					\line{AA14} \>		{\bf else} send $rec_j$ to all \quad\hspace{1cm}\% {\em \gnew{recommendation on $rec_j\ne\deccom$}} \%
					%%%%%%%%%%%%%%%%%%%%%%%%%%%%%%%%%%%%
					\\[1ex]{\bf time 2}\\
					$\forall i\in \Proc$:\\%[-1ex]
					\line{AA21} \>		\udash{{\bf if} 
						received same recommendation~$rec$ from all~$\left\{0,1,...,2t\right\}$ {\bf then} $est_i\gets rec$; \dec($est_i$) and be silent}\\[1ex]
					\line{AA22} \>		{\bf else} \hspace{1.7cm}\% {\em \gnew{no unanimous recommendation}} \% \\%[.1ex]
					\line{AA23} \>		\>{\bf if}
					more than $t$ out of~$\left\{0,1,...,2t\right\}$ recommended value $rec$  {\bf then} $est_i\gets rec$\\%[.1ex]
					\line{AA24} \>		\> {\bf else} $est_i\gets v_i$\quad \% {\em \gnew{no legitimate recommendation}} \% \\%[.7ex]
					\line{AA25} \>		\> send `\help' to all
					%%%%%%%%%%%%%%%%%%%%%%%%%%%%%%%%%%%%					
					\\[1ex]{\bf time 3} \textit{and beyond}\\
					$\forall i\in \Proc$:\\%[-1ex]
					\line{AA31} \>		\udash{{\bf if} no `\help' message received {\bf then} halt}\\[1ex]
					\line{AA32} \>		{\bf else}\hspace{.45cm}$\decision\gets \mbox{\fbox{$\BB(est_i)$}}$\\%[.1ex]
					\line{AA33} \>		\hspace{.95cm} {\bf if} undecided after time~2 {\bf then} \dec $(\decision)$
					%%%%%%%%%%%%%%%%%%%%%%%%%%%%%%%%%%%%
				\end{tabbing} 
				\normalsize 
			\end{minipage} 
		} 
		\caption{$\Ltwo'$ -- The {\em greater Sanhedrin} multi-valued variant.}
		\label{fig:MulCons2} 
	} 
\end{figure}
%==================================================================== 

%==================================================================== 
\begin{figure}[bht] 
	\centering{ 
		\fbox{ 
			\begin{minipage}[t]{150mm} 
				\footnotesize 
				\renewcommand{\baselinestretch}{2.5}
				\setcounter{linecounter}{00}
				\begin{tabbing} 
					aaaaaaaa\=aa\=aaaaaaa\=\kill  
					{\bf time 0} \\					
					$\forall i\in \Proc$:\\%[-1ex]
					\line{AA01} \>		{\bf if} $v_i = \vcom_i$ {\bf then} be silent \quad\hspace{.5cm} \% {\em \gnew{$\vcom_i$ -- a common proposal of~$i$}} \%\\%[.1ex]
					\line{AA02} \>		{\bf else} send $v_i$ to processes~$\left\{0,1,2,...,t\right\}$\quad \% \gnew{$v_i\ne \vcom_i$} \%
					%%%%%%%%%%%%%%%%%%%%%%%%%%%%%%%%%%%%
					\\[1ex]
					{\bf time 1}\\ 					
					$\forall j\in \left\{0,1,2,...,t\right\}$:\\%[-.1ex]			
					\line{AA11} \>		$\texttt{values}_j[i]\gets$
					$\left\{
					\begin{tabular}{ l l }
					$\vcom_i$ & if no message from~$i$ was received\\
					proposal received from~$i$ & otherwise \\
					\end{tabular}
					\right\}$ \\%[.1ex]
					\line{AA12} \>		$rec_j\gets \plural(\texttt{values}_j)$ \quad \% {\em \gnew{$\deccom$ -- a common plurality result}} \%\\%[.1ex]
					\line{AA13} \>		{\bf if} $rec_j=\deccom$ \ {\bf then} be silent \quad\quad \% {\em \gnew{recommendation on $\deccom$}} \% \\%[.1ex]
					\line{AA14} \>		{\bf else} send $rec_j$ to all \quad\hspace{1cm} \% {\em \gnew{recommendation on $rec_j\ne\deccom$}} \%
					%%%%%%%%%%%%%%%%%%%%%%%%%%%%%%%%%%%%
					\\[1ex]{\bf time 2}\\
					$\forall i\in \Proc$:\\%[-1ex]
					\line{AA21} \>		\udash{{\bf if} 
						received a unanimous recommendation~$rec$ from~$\left\{0,1,...,t\right\}$ {\bf then} $est_i\gets rec$}\quad\% {\em \gnew{send nothing}} \% \\[1ex]
					\line{AA22} \>		{\bf else} $est_i\gets v_i$; \ send `\err' to all;\qquad\hspace{1.5cm} \% {\em \gnew{no legitimate recommendation}} \%
					%%%%%%%%%%%%%%%%%%%%%%%%%%%%%%%%%%%%
					\\[1ex]{\bf time 3}\\
					$\forall i\in \Proc$:\\%[-1ex]
					\line{AA31} \>		\udash{{\bf if} did not receive any `\err' {\bf then} \dec($est_i$)} \qquad\quad\% {\em \gnew{send nothing}} \% \\[1ex]
					\line{AA32} \>		{\bf else} send `\help' to all	
					%%%%%%%%%%%%%%%%%%%%%%%%%%%%%%%%%%%%					
					\\[1ex]{\bf time 4} \textit{and beyond}\\
					$\forall i\in \Proc$:\\%[-1ex]
					\line{AA33} \>		\udash{{\bf if} did not receive any `\help' {\bf then} halt}\\[1ex]
					\line{AA34} \>		{\bf else}\hspace{.45cm}$\decision\gets \mbox{\fbox{$\BB(est_i)$}}$\\%[.1ex]
					\line{AA35} \>		\hspace{.95cm} {\bf if} undecided after time~3 {\bf then} \dec $(\decision)$
					%%%%%%%%%%%%%%%%%%%%%%%%%%%%%%%%%%%%
				\end{tabbing} 
				\normalsize 
			\end{minipage} 
		} 
		\caption{$\Lthree'$ -- The {\em smaller Council} Multivalued variant.}
		\label{fig:MulCons3} 
	} 
\end{figure}
%====================================================================

In multi-valued consensus, the decision value domain is commonly defined as~$V\cup\{ \bot \}$, where~$\bot$ is some default value.
A key difference between binary and multivalued consensus is that in the latter it is sometimes impossible to guarantee that processes decide on a correct process' proposal.
More precisely, if~$\frac{n}{|V|}\le t$ then it is not always possible to guarantee that in every execution we decide on some correct proposal.
Consider a failure-free run~$r$ where at most~$t$ processes propose the same value and let the decision value in that run be~$v\in V$, a proposal made by a correct process in~$r$.
Denote by $S_v$ the set of processes that proposed~$v$ in $r$.
We construct the run~$r'$ where all processes have the same initial values as in~$r$ except those in~$S_v$ which start with some other initial value $v'\ne v$. In addition, the processes in~$S_v$ are faulty in~$r'$ ($|S_v|\le t$), and perform the same actions as in~$r$, e.g., proposing~$v$.
For all correct processes the runs are indistinguishable and they must decide on~$v$, a value that no correct process proposed.

Many multivalued Byzantine protocols circumvent this issue by a strong tendency to decide on the default value~$\bot$. In a practical sense, deciding on~$\bot$ usually means a ``no-op" or a ``blank" result.
Our multivalued layers do not produce such an effect.
On the contrary, they allow a designer to improve her solution's time and communication costs while also emulating a fair plurality voting in the common case.

\end{document}